\theoremstyle{plain}
\newtheorem{theorem}{Theorem}
\newtheorem*{theorem*}{Theorem}
\newtheorem{lemma}[theorem]{Lemma}
\def\tr{\mbox{trace}}
\title{On the quantum graph spectra of graphyne nanotubes}
\author{Ngoc T. Do}
\date{}
\begin{document}
\maketitle{}
\begin{abstract}
We describe explicitly the dispersion relations and spectra of periodic Schr\"odinger operators on a graphyne nanotube structure. 
\end{abstract}

\section{Introduction}
Discovery and study of carbon nanotubes (see, e.g., \cite{Harris, Katsnelson}) predate significantly the appearance of graphene. 
However, logically, the nanotubes can be understood as sheets of graphene (see, e.g., \cite{Katsnelson}) rolled onto a cylinder (i.e., with one of the vectors of the lattice of periods quotioned out). 
This allows for deriving spectral properties of all types of nanotubes from the dispersion relation of graphene, as it was done, for instance, for quantum graph models (see \cite[and references therein]{Kuch_Post}).

In the last several years, new $2D$ carbon structures, dubbed \emph{graphyne(s)}, have been suggested and variety of geometries has been explored (see, e.g. \cite{Ivanovskii,Bardhan,Gorling,Do_Kuch} for details and further references). 
Some of the graphynes promises to have even more interesting properties than graphene (if and when they can be synthesized).

In \cite{Do_Kuch}, one of the simplest (in terms of the fewest number of atoms in the unit cell) graphynes (see Fig. \ref{F:G}) was studied and its complete spectral analysis was done. 
In particular, complete dispersion relation was found and interesting Dirac cones were discovered\footnote{
Near the vertices of these cones the mass of the charge carriers inside the material is effectively zero. 
Thus they travel at an extremely high speed and lead to extraordinary electronic properties (see \cite{Katsnelson}).}. 
It is thus natural to look at the nanotubes obtained by folding a sheet of this particular graphyne. 
This is what we intend to do in this article.
Carbon nanotubes structure and related Hamiltonian are introduced in section \ref{S:nanotube}.
In section \ref{S:main_result} we derive the dispersion relation and band-gap structure for nanotube with main results stated in Theorem \ref{T:main}.
Proofs of supporting lemmas \ref{L:F1}, \ref{L:F2}, \ref{L:F3} are provided in section \ref{S:supporting}. 

\section{Carbon nanotube structures and related Schr\"odinger operators}\label{S:nanotube}

Different analytic and numerical techniques (e.g., tight binding approximation, density functionals, etc.) have been used to study the properties of carbon nano-structures. 
Among them there is one that nowadays is called ``quantum graphs'' (see \cite{Berk_Kuch}).
Namely, one studies Schr\"odinger type operators along the edges of the graph representing the structure. 
Certainly, ``appropriate'' junction conditions at the vertices need to be used.
This type of models has been used in chemistry for quite a while (see, e.g. \cite{Amovilli, Ruedenberg, Ruedenberg_sourcebook} for details and references).

While in \cite{Do_Kuch} we studied spectra of Schr\"odinger operators on the graphyne shown in Fig. \ref{F:G}, now we will introduce carbon nanotubes related to that structure, which will also carry similar Schr\"odinger operators. Studying the spectra of the latter ones is our goal.
\begin{figure}[ht!]
\includegraphics[width=39mm]{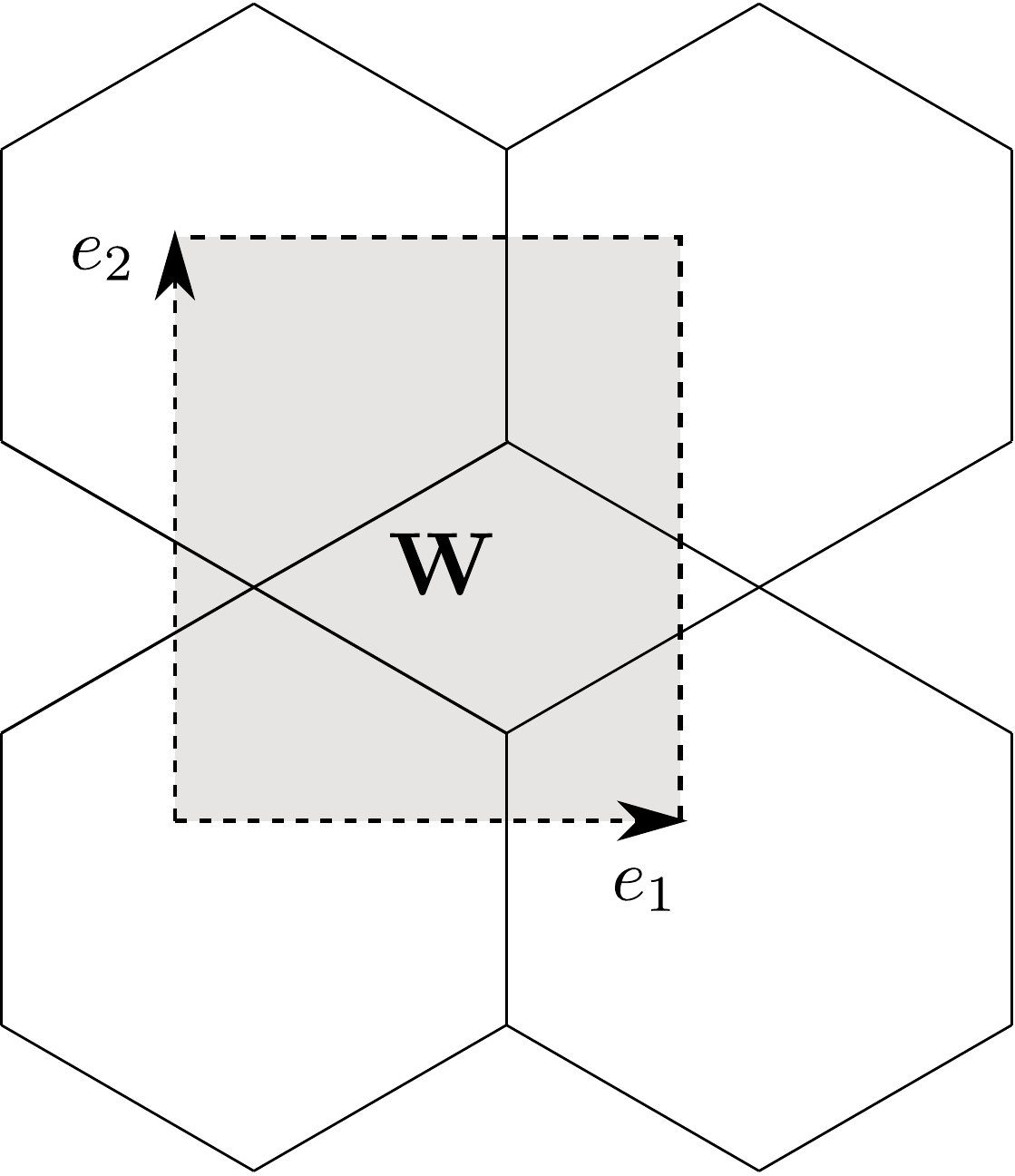}
\caption{Graph $G$ and a fundamental domain $W$}\label{F:G}
\end{figure}
In what is shown in Fig. \ref{F:G}, the vectors $e_1$ and $e_2$ generate the square \emph{lattice} of shifts that leave the geometry invariant. 
The shaded domain $W$ is the unit (fundamental) cell that we choose, which we will call \emph{Wigner-Seitz cell}. 
The lengths of all edges are assumed to be equal to $1$ and each edge is equipped with a coordinate $x$ that identifies it with the segment $[0,1]$.

Let $p=(p_1,p_2)\in \mathbb{Z}^2$\textbackslash$\{(0,0)\}$ be a two-dimensional integer vector, then $pe:=p_1e_1+p_2e_2$ belongs to the latice of translation symmetries of the graphyne $G$.
This means $G+pe=G$.
We define $\iota_p$ to be the equivalence relation that identifies vectors $z_1, z_2\in G$ if $z_2-z_1=kpe$ for some integer $k$.
Then \emph{nanotube} $T_p$ is the graph obtained as the quotient of $G$ with respect to this equivalence relation:
$$
T_p:=G/\iota_p.
$$

Let $q_0(x)$ be a real-valued, even\footnote{The evenness assumption is made not only for technical convenience. 
Without this condition, the graph must be oriented in order to define the operator.
This assumption is also needed to preserve the symmetry that we use.}, square integrable function on [0,1] (i.e., $q_0\in L_2[0,1]$ and $q_0(x)=q_0(1-x)$).
As in \cite{Do_Kuch}, we can use the identifications of the edges with the segment $[0,1]$ to transfer $q_0$ to all edges of $T_p$ and so define a potential $q(x)$ on the whole $T_p$.

In \cite{Do_Kuch}, the operator
\begin{equation}\label{E:graph_op}
   H u(x):=-\frac{d^2u(x)}{dx^2}+q(x)u(x)
\end{equation}
   on the graphyne $G$ was defined and studied. 
Analogously, we introduce the nanotube Schr\"odinger operator $H_p$ that acts on each edge in the same way as $H$ does:
\begin{equation}
   H_p u(x)=-\frac{d^2u(x)}{dx^2}+q(x)u(x)
   \label{E:H}
\end{equation}
and whose domain $D(H_p)$ is the set of all functions $u(x)$ on $T_p$
\footnote{Equivalently, one can say that $u(x)$ is a function on $G$ such that $u(x+kpe)=u(x)$ for all $k\in\mathbb{Z}$.} 
such that:
\begin{enumerate}
 \item $u_e:=u|_e \in H_2(e)$, for all $e \in E(T_p),$
 \item $\sum_{e\in E(T_p)}\|u_e\|^2_{H_2(e)}<\infty$
 \item at each vertex these functions satisfy \emph{Neumann vertex condition},
 i.e. $u_{e_1}(v)=u_{e_2}(v)$ for any edges $e_1,e_2$ containing the vertex $v$ and
 $$
  \sum_{v\in e} u_e' (v)=0, \text{ for any vertex } v \text{ in } V(T_p).
  $$
\end{enumerate}

Understanding the spectra $\sigma(H_p)$ of these nanotube operators is our task here.
\section{Spectra of nanotube operators} \label{S:main_result}
The questions we address here are about the structure of the absolute continuous spectrum $\sigma_{ac}(H_p)$, singular continuous spectrum $\sigma_{sc}(H_p)$, pure point spectrum  $\sigma_{pp}(H_p)$, as well as the shape of the dispersion relation and spectral gaps opening.

In this section, we study the spectra of operator $H_p$ acting on the nanotube $T_p=T_{(p_1,p_2)}$ for some $p=(p_1,p_2)\in\mathbb{Z}^2$. If $p$ is a zero vector, instead of a nanotube one gets the whole graphyne $G$, so we will always assume, without repeating this every time, that $p\neq (0,0)$.

The reciprocal lattice is generated by the $2\pi$-dilations of the pair of vectors $(1/\sqrt{3},0)$, $(0,1/2)$ biorthogonal to the pair $e_1, \ e_2$. 
Using those vectors as the basis, we will use coordinates $(\theta_1,\theta_2)$. In these coordinates, the square $B=[-\pi,\pi]^2$ is a fundamental domain of the reciprocal lattice. Abusing notations again, we will call it the \emph{Brillouin zone} of the graphyne $G$.

The standard Floquet-Bloch theory \cite{Berk_Kuch,Do_Kuch,Eastham,Reed_Simon_4,Kuch_Floquet_book} gives the
following direct integral decomposition of $H$:
\begin{center}
$H=\int\limits_B^\oplus H^\theta d\theta$.
\end{center}
Here $B$ is the Brillouin zone, $H^\theta$ is the Bloch Hamiltonian that acts as (\ref{E:graph_op})
on the domain consisting of functions $u(x)$ that belong to $H^2_{loc}(G)$ and satisfy Neumann vertex conditions and Floquet condition
\begin{equation}\label{E:floquet}
u(x + p_1e_1+ p_2e_2) = u(x)e^{ip\theta}= u(x)e^{i(p_1\theta_1+p_2\theta_2)}
\end{equation}
for all $(p_1,p_2)\in\mathbb{Z}^2$ and all $x\in G$.

Since functions on $T_p$ are in one-to-one correspondence with $p$-periodic functions $u$ on $G$, i.e. $u(x+p_1 e_1+ p_2 e_2)=u(x),$ 
only the values of \emph{quasimomenta} $\theta$ satisfying the condition 
$p\theta=p_1\theta_1+p_2\theta_2\in 2\pi\mathbb{Z}$ will enter the direct integral expansion of $H_p$.
Denoting by $B_p$ the set
\begin{equation}
     B_p=\{(\theta_1,\theta_2)\in B | p\theta=p_1\theta_1+p_2\theta_2=2k\pi, k \in \mathbb{Z}\},
     \label{E:Bp}
\end{equation}
one obtains the direct integral decomposition for $H_p$:
$$
H_p=\int\limits^\oplus_{B_p} H^\theta d\theta.
$$
As a consequence (see, e.g., \cite{Berk_Kuch})
\begin{equation}
     \sigma(H_p)=\bigcup_{\theta\in B_p}\sigma(H^\theta).
     \label{E:restricted_spectrum}
\end{equation}
Moreover, the dispersion relation of $H_p$ is the dispersion relation of $H$ restricted to $B_p$.

We now need to recall some notations and results of \cite{Do_Kuch} that describe the spectrum and dispersion relation of the operator $H$.

We extend the potential $q_0(x)$ on $[0,1]$ to a $1$-periodic function $q_{per}(x)$ on $\mathbb{R}$ and denote by $D(\lambda)=\tr M(\lambda)$ the \emph{discriminant} (or the \emph{Lyapunov function}) of the periodic Sturm-Liouville operator 
$$
H^{per}:=-\frac{d^2}{dx^2}+q_{per}(x).
$$
Here $M(\lambda)$ is the \emph{monodromy} matrix for this operator (see \cite{Eastham})
\footnote{I.e. $M(\lambda)$ is the matrix that transforms the Cauchy data $(u(0)\hspace{2 mm} u'(0))^{\tau}$ of the solution of $H_{per}u=\lambda u$ at zero to the data $(u(1)\hspace{2 mm} u'(1))^{\tau}$ at the end of the period.}.

By $\Sigma^D$ we denote the (discrete) spectrum of the operator $-d^2/dx^2+q_{0}(x)$ on $[0,1]$ with Dirichlet conditions at the ends of this segment.

Finally, we also introduce the triple-valued function 
$ F(\theta):=(F_1(\theta),\\ F_2(\theta),F_3(\theta))$ on $B$ that provides for each $\theta$ the three (real) roots of the equation
\begin{equation}
\qquad 9x^3-x-(\cos\theta_1+1)(3x+ \cos\theta_2)=0.
\label{E:eta_equation}
\end{equation}
We assume here that $F_1\leq F_2\leq F_3$ for each value of $\theta$.

We can now quote the main result of \cite{Do_Kuch}, which describes the spectral structure of the graphyne operator $H$:

\begin{theorem}\cite[Theorem 7]{Do_Kuch}
\indent
\begin{enumerate}
     \item The singular continuous spectrum $\sigma_{sc}(H)$ is empty.

     \item The dispersion relation of operator $H$ consists of the following two parts:\\
     i) pairs $(\theta,\lambda)$ such that $0.5D(\lambda)\in F(\theta)$ (or, $\lambda\in D^{-1}(2F(\theta))$), where $\theta$ is changing in the Brillouin zone;\\
     and \\
     ii) the collection of flat (i.e., $\theta$-independent) branches $(\theta,\lambda)$ such that $\lambda\in\Sigma^D$.

     \item The absolutely continuous spectrum $\sigma_{ac}(H)$ has band-gap structure and is (as the set) the same as the spectrum $\sigma(H^{per})$
     of the Hill operator $H^{per}$ with potential obtained by extending periodically $q_0$ from $[0,1]$.
     In particular, $$\sigma_{ac}(H)=\{\lambda\in\mathbb{R}\big|\left|D(\lambda)\right|\leq 2\},$$
     where $D(\lambda)$ is the discriminant of $H^{per}.$

     \item The bands of $\sigma(H)$ do not overlap (but can touch).
     Each band of $\sigma(H^{per})$ consists of three touching bands of $\sigma(H)$.

     \item The pure point spectrum $\sigma_{pp}(H)$ coincides with $\Sigma^D$ and belongs to the union of the edges of spectral gaps of
     $\sigma(H^{per})=\sigma_{ac}(H)$.
     \\
     Eigenvalues $\lambda\in\Sigma^D$ of the pure point spectrum are of infinite multiplicity and the corresponding eigenspaces are
     generated by simple loop (i.e., supported on a single hexagon or rhombus) states.

     \item Spectrum $\sigma(H)$ has gaps if and only if $\sigma(H^{per})$ has gaps
     \footnote{It is well known that for any non-constant potential $q$, the spectrum $\sigma(H^{per})$ has some gaps (see \cite{Eastham}).
     Moreover, for a generic potential $q$, all gaps are open (see \cite{Simon_generic}).}.
\end{enumerate}
\end{theorem}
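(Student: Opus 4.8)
The plan is to run the standard quantum-graph Floquet--Bloch reduction and then carry out a careful analysis of the resulting cubic. Since the direct integral decomposition $H=\int_B^\oplus H^\theta\,d\theta$ is already available, it suffices to understand each fiber $H^\theta$. Fix $\theta$ and let $c(x,\lambda),s(x,\lambda)$ be the fundamental solutions of $-u''+q_0u=\lambda u$ on $[0,1]$ normalized by $c(0)=1,\,c'(0)=0$ and $s(0)=0,\,s'(0)=1$; evenness of $q_0$ gives the symmetry $s'(1,\lambda)=c(1,\lambda)=\tfrac12 D(\lambda)$. First I would dispose of the generic case $\lambda\notin\Sigma^D$, where $s(1,\lambda)\neq 0$: then every edge solution is determined by its two endpoint values, so a fiber eigenfunction is encoded by the finite vector $a=(a_v)$ of vertex values on the cell $W$. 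Writing each $u_e$ in the two-point basis $s(x,\lambda)/s(1,\lambda)$, $s(1-x,\lambda)/s(1,\lambda)$ and imposing the Neumann (Kirchhoff) derivative condition at every vertex collapses the ODE eigenvalue problem to a finite Bloch eigenvalue problem $A(\theta)\,a=\tfrac12 D(\lambda)\,a$, where $A(\theta)$ is the Floquet version of the vertex-coupling operator of the combinatorial cell. Hence $\lambda\in\sigma(H^\theta)$ iff $\tfrac12 D(\lambda)$ is an eigenvalue of $A(\theta)$, and evaluating the associated Floquet determinant for this particular graph produces exactly the cubic \eqref{E:eta_equation}, whose three roots are the branch values $F_1\le F_2\le F_3$; this is part~(2)(i).

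Next I would treat the exceptional set $\lambda\in\Sigma^D$, where $s(1,\lambda)=0$. Here the Dirichlet eigenfunction on $[0,1]$ can be pasted around any single hexagon or rhombus with alternating signs to build a compactly supported eigenfunction of $H$ that satisfies all vertex conditions and is $\theta$-independent; this yields the flat branches of part~(2)(ii) and, by counting independent loops, the infinite-multiplicity pure point spectrum of part~(5). Completeness of these loop states inside the $\lambda\in\Sigma^D$ eigenspace is the one point needing genuine care.

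For the continuous-spectrum claims I would argue as follows. Absence of singular continuous spectrum (part~1) is the standard consequence of analyticity of the Bloch variety: each branch $\lambda_n(\theta)$ is either a non-constant real-analytic function, contributing purely absolutely continuous spectrum, or a constant flat branch, contributing pure point spectrum, so no singular continuous part can arise. For part~(3) the crucial computation is the range of $F$ over the Brillouin zone: I would show that as $\theta$ runs over $B=[-\pi,\pi]^2$ the set $\bigcup_\theta\{F_1(\theta),F_2(\theta),F_3(\theta)\}$ is exactly $[-1,1]$, by locating the extrema of the roots of \eqref{E:eta_equation}. Granting this, $\tfrac12 D(\lambda)\in F(\theta)$ for some $\theta$ iff $|D(\lambda)|\le 2$, so $\sigma_{ac}(H)=\{\lambda:|D(\lambda)|\le 2\}=\sigma(H^{per})$.

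Finally, parts~(4) and~(6) follow by tracking the three monotone root-branches through $D^{-1}$: since $D$ is monotone on each band of the Hill operator, each $\sigma(H^{per})$-band splits into the three images of $F_1,F_2,F_3$, which touch at the points where two roots of the cubic coincide (the vanishing locus of its discriminant) but do not overlap in their interiors; and because the range of $F$ is precisely $[-1,1]$ while the Dirichlet levels sit at gap edges, $\sigma(H)$ has a gap exactly where $\sigma(H^{per})$ does. The main obstacle throughout is the range and extrema analysis of the cubic \eqref{E:eta_equation}, i.e.\ proving that the three root-branches sweep out all of $[-1,1]$ and meet only on the band edges; everything else is either the textbook quantum-graph reduction or a soft functional-analytic argument.
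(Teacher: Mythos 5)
This theorem is not proved in the paper at hand---it is quoted verbatim from \cite[Theorem 7]{Do_Kuch}---and your blind sketch reconstructs essentially the argument of that reference: the Floquet--Bloch fiber reduction to the scalar condition $\tfrac12 D(\lambda)\in F(\theta)$ via the cubic \eqref{E:eta_equation} (your finite Bloch problem $A(\theta)a=\tfrac12 D(\lambda)a$ is exactly the statement $\eta(\lambda)=F_j(\theta)$ of \cite[Lemma 3]{Do_Kuch}, since evenness of $q_0$ gives $\eta(\lambda)=\tfrac12 D(\lambda)$ off $\Sigma^D$), flat branches and infinite multiplicity from simple loop states, absence of singular continuous spectrum from the analytic fibration, and $\sigma_{ac}$ determined by the range $F(B)=[-1,1]$ together with the non-overlapping interlaced ranges of $F_1,F_2,F_3$. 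The one step you rightly flag as delicate---completeness of the loop states in the $\lambda\in\Sigma^D$ eigenspaces---is indeed where the genuine work lies, and it is settled in \cite[Lemma 6]{Do_Kuch} by the same ``lowest point'' elimination argument that the present paper reuses in Section \ref{S:main_result} for the nanotube case.
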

Fig. \ref{F:touch} illustrates the statements of \cite[Theorem 7]{Do_Kuch}.
\begin{figure}[ht!]
\includegraphics[width=84mm]{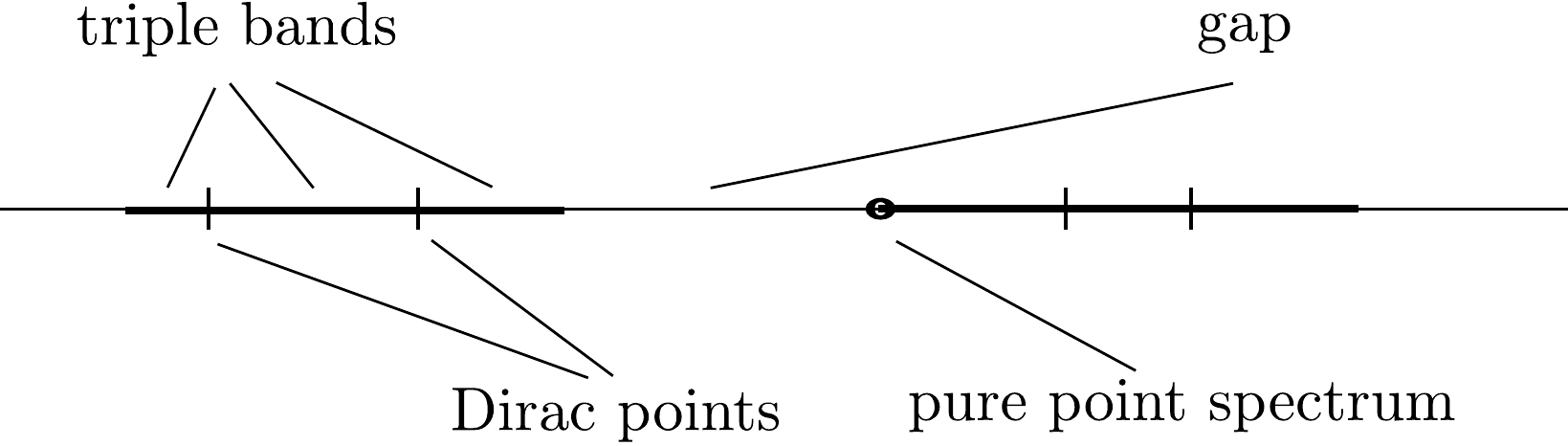}
\caption{Bold segments are bands of $\sigma(H^{per})$. 
Each segment is split into triple bands of $\sigma(H)$ with Dirac points between them. 
One eigenvalue at the end of a band is shown}\label{F:touch}
\end{figure}

Since, in order to obtain the dispersion relation for the nanotube operator $H_p$, we need to restrict this relation to the subset $B_p$ of the Brillouin zone $B$, the previous theorem provides a good start.
Indeed, we see that $\Sigma^D$ belongs to the pure point spectrum $\sigma_{pp}(H_p)$
and the rest of the spectrum is defined by $D^{-1}(2F(B_p))$. 
However, further analysis is still needed, since during the restriction to $B_p$ new gaps might open and new bound states might appear. 
These effects are also expected to depend upon the vector $p$, i.e. on the type of the nanotube (for the ``usual'' nanotubes the names "zig-zag," "armchair," and "chiral" are used, but they are not applicable in our situation).

In what follows, we will study the range of function $F$ restricted to $B_p$.
According to (\ref{E:Bp}), in the $\theta_1\theta_2$-coordinate system $B_p$ is a set of points belonging to a family of parallel lines restricted to the Brillouin zone $B$.
If the slope of these lines is negative then we reflect $B_p$ over the $\theta_2$-axis to make the slope positive. 
Let us denote the new set of points as $R_p$, then $R_p=\{(\theta_1,\theta_2)\in B: |p_2|\theta_2=|p_1|\theta_1-2k\pi, k\in\mathbb{Z}\}$.
Since $F(\theta_1,\theta_2)=F(-\theta_1,\theta_2)$ for all $(\theta_1,\theta_2)$, we have $F(B_p)=F(R_p)$. 
\begin{figure}[ht!]
\includegraphics[width=84mm]{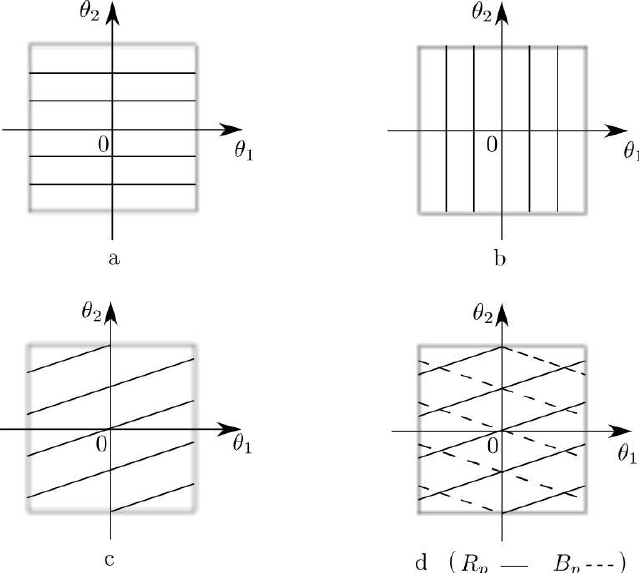}
\caption{All possible cases of $R_p$ vesus $B_p$: a) $p_1=0$, b) $p_2=0$, c) $p_1p_2<0$, d) $p_1p_2>0$}
\end{figure}
We denote by $N_p$ the set of points from $R_p$ that belong to lines with nonnegative $\theta_1$-intercept (or nonpositive $\theta_2$-intercept in case lines from $R_p$ are parallel to $\theta_1$-axis). 
Then $N_p=\{(\theta_1,\theta_2)\in B: |p_2|\theta_2=|p_1|\theta_1-2k\pi, k=0,1,\ldots\}$.
Since $F(\theta_1,\theta_2)=F(-\theta_1,-\theta_2)$ for all $(\theta_1,\theta_2)$, $F(R_p)=F(N_p)$.

Let $V_q:=\{(\theta_1,\theta_2)\in B: q_2\theta_2=q_1\theta_1-2k\pi, k=0,1,\ldots\}$, $q=(q_1, q_2)$.
The above argument proves that $F(B_p)=F(V_q)$ for $q=(q_1,q_2)=(|p_1|,|p_2|)$. 
Thus, it is sufficient to study the range of function $F(\theta)$ restricted to $V_q$ for nonnegative $q_1, q_2$.  

Let us denote $l_0=[q_1\theta_0/2\pi]$. 
Below we state three lemmas about the range of functions $F_j,j=\overline{1,3}$, proofs of which will be provided in Section \ref{S:supporting}.

\begin{lemma}\label{L:F1}
\indent
  If $q_2=0$ and $q_1>1$, then $$F_1(V_q)=[-1,F_1(2l_0\pi/q_1,0)]\cup [F_1(2(l_0+1)\pi/q_1,\pi),-1/3].$$ 
	
	If $q_2=0$ and $q_1=1$, then $F_1(V_q)=[-1,-2/3]$.

  If $q_2\neq 0$ is even, then $F_1(V_q)=\big[-1,-\frac{1}{3}\big]$. 
	
	If $q_2$ is odd, then $F_1(V_q)=[a,-1/3]$ for some $a:=\min F_1(V_q)\in(-1,-2/3]$. In particular, if $q_1=0$ then $a=F_1(0,-2[q_2/2]\pi/q_2)$
\end{lemma}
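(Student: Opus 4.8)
The plan is to treat $F_1$ as a function of the two cosines $c_1=\cos\theta_1$ and $c_2=\cos\theta_2$, since the cubic \eqref{E:eta_equation} depends on $\theta$ only through them, and to exploit monotonicity in each variable. Writing $p(x)=9x^3-x-(c_1+1)(3x+c_2)$ and differentiating $p(F_1)=0$ implicitly, one gets
\[
\frac{\partial F_1}{\partial c_2}=\frac{c_1+1}{p'(F_1)}\ge 0,\qquad \frac{\partial F_1}{\partial c_1}=\frac{3F_1+c_2}{p'(F_1)}\le 0,
\]
valid where the smallest root is simple (so $p'(F_1)>0$) and extended by continuity, because $3F_1+c_2\le 3(-1/3)+1=0$. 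Thus $F_1$ is nondecreasing in $c_2$ and nonincreasing in $c_1$. Evaluating $p(-1/3)=(c_1+1)(1-c_2)\ge 0$ and doing a similar computation at $x=-1$ yields $F_1\in[-1,-1/3]$, with maximum $-1/3$ attained exactly when $c_1=-1$ (any $\theta_2$) or when $c_2=1,\ c_1\le -1/3$, and minimum $-1$ attained \emph{only} at $(c_1,c_2)=(1,-1)$, i.e. $\theta=(0,\pm\pi)$. Finally I record the two boundary profiles explicitly: at $\theta_2=\pi$ the value $x=1/3$ is always a root of \eqref{E:eta_equation}, and factoring it out gives the lower envelope $b(\theta_1):=F_1(\theta_1,\pi)=\big(-1-\sqrt{12\cos\theta_1+13}\big)/6$; at $\theta_2=0$ the value $x=-1/3$ is always a root, and factoring it out gives the upper envelope $t(\theta_1):=F_1(\theta_1,0)=\big(1-\sqrt{12\cos\theta_1+13}\big)/6$ for $\cos\theta_1\ge -1/3$ and $t(\theta_1)=-1/3$ otherwise. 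Both $b,t$ are increasing on $[0,\pi]$, and by monotonicity the image of $F_1$ over a vertical segment $\{\theta_1\}\times[0,\pi]$ is exactly $[b(\theta_1),t(\theta_1)]$.

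For the cases $q_2\ne 0$ the key observation is that, on the torus $(\mathbb R/2\pi\mathbb Z)^2$, the set $V_q$ is a union of $\gcd(q_1,q_2)$ parallel circles in the direction $(q_2,q_1)$; since $q_2\ne 0$, each such circle projects onto the whole $\theta_1$-circle and hence meets $\{\theta_1=\pm\pi\}$, where $F_1=-1/3$. Therefore every connected component of $V_q$ has image an interval ending at $-1/3$, and the union of these intervals is a single interval $[a,-1/3]$ with $a=\min_{V_q}F_1$; in particular no internal gaps occur. It then remains to locate $a$. If $q_2$ is even, the global minimizer $(0,-\pi)$ lies on $V_q$ (take $k=q_2/2$ in \eqref{E:Bp}), so $a=-1$ and $F_1(V_q)=[-1,-1/3]$. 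If $q_2$ is odd, the minimizer $(c_1,c_2)=(1,-1)$ forces $\theta_2=\pm\pi$, which never occurs on $V_q$, so $a>-1$; since $(0,0)\in V_q$ gives $F_1=-2/3$ we also get $a\le -2/3$, whence $a\in(-1,-2/3]$. For $q_1=0$ the lines are horizontal, $c_2$ is constant on each, and $F_1$ is minimized by taking $c_1=1$ and the most negative available $c_2=\cos\big(2[q_2/2]\pi/q_2\big)$, giving $a=F_1\big(0,-2[q_2/2]\pi/q_2\big)$.

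The case $q_2=0$ is where gaps genuinely appear, because now $V_q$ is the family of vertical segments $\theta_1=\theta_1^{(k)}:=2k\pi/q_1$, $k=0,\dots,[q_1/2]$, none of which (except possibly $\theta_1=\pi$) reaches the maximizing set. By the first paragraph the image on segment $k$ is $[b(\theta_1^{(k)}),t(\theta_1^{(k)})]$, and these move upward as $k$ grows. I would split the segments at the angle $\theta_0$ at which $t$ changes form (equivalently $\cos\theta_0=-1/3$), with $l_0=[q_1\theta_0/2\pi]$. For $k\le l_0$ one has $\cos\theta_1^{(k)}\ge -1/3$, so $t-b\equiv 1/3$ and each $s_k:=\sqrt{12\cos\theta_1^{(k)}+13}$ lies in $(3,5]$; consecutive segments overlap since $b(\theta_1^{(k+1)})-b(\theta_1^{(k)})=(s_k-s_{k+1})/6<1/3$, so the lower segments merge into $[-1,t(\theta_1^{(l_0)})]=[-1,F_1(2l_0\pi/q_1,0)]$. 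For $k\ge l_0+1$ one has $t\equiv -1/3$, so these segments share the top endpoint and merge into $[b(\theta_1^{(l_0+1)}),-1/3]=[F_1(2(l_0+1)\pi/q_1,\pi),-1/3]$. The union is the claimed pair of intervals (which coincides with $[-1,-1/3]$ exactly when $b(\theta_1^{(l_0+1)})\le t(\theta_1^{(l_0)})$). The sub-case $q_1=1$ is a single segment giving $[-1,-2/3]$.

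The main obstacle I anticipate is the quantitative gap-control when $q_2=0$: proving that the lower block of segments never breaks, which is precisely the estimate $s_k-s_{k+1}<2$ forced by $s\in(3,5]$ on that block, together with correctly pinpointing the single admissible gap at the index $l_0$ determined by $\cos\theta_0=-1/3$. By contrast the $q_2\ne 0$ cases are comparatively soft once the torus picture is in place, since the presence of $\{\theta_1=\pm\pi\}$ on every component removes all internal gaps and reduces the problem to identifying the single number $a$.
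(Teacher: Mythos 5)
Your proposal is correct, and for the cases $q_2\neq 0$ it takes a genuinely different route from the paper. The backbone is shared: the paper also starts from the two monotonicity properties of $F_1$ (proved geometrically there, by sliding the line $g(x)=(\cos\theta_1+1)(3x+\cos\theta_2)$ against $f(x)=9x^3-x$, rather than by your implicit differentiation in $c_1,c_2$), and for $q_2=0$ both arguments decompose $V_q$ into vertical segments split at the index $l_0$. There your version is more explicit: the paper merges the lower block by noting that every $F_1(T_{q,k})$, $k\le l_0$, contains the common value $-2/3=F_1(0,0)$, while you factor the roots $1/3$ (at $\theta_2=\pi$) and $-1/3$ (at $\theta_2=0$) out of (\ref{E:eta_equation}) to get closed-form envelopes $b,t$ and prove consecutive overlap via $s_k-s_{k+1}\le 2$; both work. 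The real divergence is at $q_2\neq 0$: the paper runs two separate interval-chaining arguments (for even $q_2$, along the segments $T_{q,jk_0}$ with endpoint bookkeeping $a_{j+1}=b_{j-1}$; for odd $q_2$, a similar chain from the minimizing segment $T_{q,k_1}$ to the first segment $T_{q,k_2}$ meeting $\theta_1=\pi$), whereas you observe once and for all that on the torus the constraint set is a union of $\gcd(q_1,q_2)$ closed circles of direction $(q_2,q_1)$, each of which, since $q_2\neq 0$, meets $\theta_1=\pm\pi$ where $F_1$ attains its global maximum $-1/3$; hence every component's image is an interval with right endpoint $-1/3$ and the union is automatically $[a,-1/3]$ with no internal gaps. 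This is slicker, unifies the even and odd cases, and reduces the dichotomy to whether the unique global minimizer $(0,\pm\pi)$ lies on the set (your identification of $V_q$'s image with that of the full torus subgroup is legitimate, by the same evenness symmetries of $F$ the paper uses to define $V_q$). Two small points: in the case $q_2=0$, $q_1>1$ you tacitly assume the upper block $l_0<k\le[q_1/2]$ is nonempty; the paper verifies this ($1\le q_1/2-q_1\theta_0/2\pi$ for $q_1\ge 6$, with $2\le q_1\le 5$ checked directly), and you should too, since the claimed second interval otherwise makes no sense. Also your strict bound $s_k-s_{k+1}<2$ is only $\le 2$ as argued from $s\in(3,5]$, but that suffices, as touching intervals still union to an interval. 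Neither point threatens the argument.
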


\begin{lemma}\label{L:F2}
\indent
If $q_1\leq 1$ and $q_2$ is odd, then $F_2(V_q)=[-1/3,a]$ for some $a\in[0,1/3)$. 
Otherwise $F_2(V_q)=[-1/3,1/3]$.
\end{lemma}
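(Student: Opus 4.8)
The plan is to recast the statement as a question about the middle root of a two-parameter cubic and then read off the range from a few boundary computations plus a connectedness argument. Writing $a:=1+\cos\theta_1\in[0,2]$ and $b:=\cos\theta_2\in[-1,1]$, equation (\ref{E:eta_equation}) becomes $h(x):=9x^3-(1+3a)x-ab=0$, and $F_2$ is the middle root of $h$, regarded as a function $m(a,b)$ on $[0,2]\times[-1,1]$. The first thing I would record is the symmetry $m(a,-b)=-m(a,b)$: the change of variable $x\mapsto -x$ carries the cubic for $(a,b)$ into the negative of the cubic for $(a,-b)$, so the root orderings reverse and flip sign. In particular $m(a,0)=0$, and together with the relations $F(\theta_1,\theta_2)=F(-\theta_1,\theta_2)=F(-\theta_1,-\theta_2)$ already used in the text this is what legitimizes replacing $B_p$ by $V_q$.

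Next I would pin down the global range and the loci where the extreme values occur. Evaluating at $\pm 1/3$ gives $h(-1/3)=a(1-b)\ge 0$ and $h(1/3)=-a(1+b)\le 0$, and a short location-of-roots argument (the sign of a monic-up-to-scaling cubic between and outside its roots) forces $-1/3\le m(a,b)\le 1/3$ throughout $[0,2]\times[-1,1]$. To identify the equality cases I would factor on the boundary $b=-1$, where $h(x)=(3x-1)(3x^2+x-a)$, so that $m=1/3$ exactly when $b=-1$ and $a\ge 2/3$; by the symmetry of the previous paragraph, $m=-1/3$ exactly when $b=1$ and $a\ge 2/3$, and these are the only points where $F_2$ reaches $\pm 1/3$. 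Since the line $k=0$ passes through the origin, $(0,0)\in V_q$ for every $q$, and there $(a,b)=(2,1)$ gives $m(2,1)=-1/3$; hence the minimum of $F_2$ on $V_q$ is always $-1/3$, attained.

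To show that $F_2(V_q)$ is a single interval $[-1/3,\alpha]$ with $\alpha:=\max_{V_q}F_2$, I would first note that $V_q$ is a finite union of closed chords of $[-\pi,\pi]^2$, hence compact, so $\alpha$ is attained and each chord has a closed interval as its $F_2$-image. The key observation is that the zero set is $\{m=0\}=\{\theta_1=\pm\pi\}\cup\{\theta_2=\pm\pi/2\}$ (because $0$ is a root of $h$ iff $ab=0$, and in that case it is the middle root), and that every chord of $V_q$ meets this set: a positive-slope chord of the square either touches a vertical edge $\theta_1=\pm\pi$, or it runs from the bottom edge to the top edge and therefore crosses $\theta_2=\pm\pi/2$; the horizontal family $q_1=0$ reaches $\theta_1=\pm\pi$ and the vertical family $q_2=0$ crosses $\theta_2=\pm\pi/2$. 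Thus $0$ lies in the image of every chord, so the intervals all share the point $0$, their union is connected, and $F_2(V_q)=[-1/3,\alpha]$ with $\alpha\ge 0$.

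It remains to compute $\alpha$, and this is the crux. By the second paragraph, $\alpha=1/3$ if and only if $V_q$ meets the locus $\{b=-1,\ a\ge 2/3\}$, i.e. contains a point with $\theta_2=\pm\pi$ and $\cos\theta_1\ge-1/3$ (equivalently $|\theta_1|\le\arccos(-1/3)$). I would settle this by a parity and lattice analysis of where the lines $q_2\theta_2=q_1\theta_1-2k\pi$, $k\ge 0$, strike $\theta_2=\pm\pi$. If $q_2$ is even then $(0,-\pi)\in V_q$ (taking $k=q_2/2$, or the $\theta_1=0$ line when $q_2=0$), giving $\cos\theta_1=1$ and hence $\alpha=1/3$. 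If $q_1\ge 2$ then at $\theta_2=-\pi$ the admissible abscissae $\theta_1=(2k-q_2)\pi/q_1$ form an arithmetic progression of step $2\pi/q_1\le\pi$ inside $[-\pi,\pi]$, which, having step shorter than the length $2\arccos(-1/3)>\pi$ of the target arc, must deposit a point with $|\theta_1|\le\arccos(-1/3)$; again $\alpha=1/3$. These two cases together are precisely the ``otherwise'' regime. In the remaining case $q_1\le 1$ and $q_2$ odd I would check directly that the lines never reach $\theta_2=\pm\pi$ except at the corners $(\pm\pi,-\pi)$ (and, when $q_2=1$, $(\pi,\pi)$), where $\cos\theta_1=-1<-1/3$, so $a=0$ and $m=0$; hence $\{b=-1,a\ge 2/3\}$ is missed, $\alpha<1/3$, and since $0$ is always attained we get $\alpha\in[0,1/3)$. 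The main obstacle is exactly this last dichotomy: verifying uniformly in $q$ whether the integer translates of the lines land a point of $V_q$ on the short arc $\{\theta_2=\pm\pi,\ \cos\theta_1\ge-1/3\}$, keeping track of the constraint $k\ge 0$ in the small cases, which is where the parity of $q_2$ and the size of $q_1$ decide the outcome.
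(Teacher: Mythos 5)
Your proposal is correct, and it reaches the lemma by a genuinely different route than the paper's own proof. The paper proceeds by a case analysis over slope regimes ($q_2=0$ or $q_1/q_2\ge\pi/\theta_0$; $1<q_1/q_2<\pi/\theta_0$; $q_1/q_2\le 1$ with $q_1>1$; then $q_1\le 1$ split by the parity of $q_2$), in each regime hand-picking chords --- $T_{q,0}$ through the origin and a $T_{q,k_0}$ with $k_0$ chosen explicitly so that the chord strikes $\theta_2=-\pi$ inside $[-\theta_0,\theta_0]$ --- and gluing $[-1/3,0]$ and $[0,1/3]$ from their images, relying on the monotonicity-of-roots remarks developed for Lemma \ref{L:F1}. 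You instead compute the relevant level sets of $F_2$ exactly: the boundary values $h(\mp 1/3)=\pm a(1\mp b)$, the factorization $9x^3-(1+3a)x+a=(3x-1)(3x^2+x-a)$ at $b=-1$, and the antisymmetry $m(a,-b)=-m(a,b)$ (all of which I checked) give $\{F_2=1/3\}=\{\theta_2=\pm\pi,\ |\theta_1|\le\theta_0\}$, $\{F_2=-1/3\}=\{\theta_2=0,\ |\theta_1|\le\theta_0\}$, and $\{F_2=0\}=\{\theta_1=\pm\pi\}\cup\{\theta_2=\pm\pi/2\}$; since every chord of $V_q$ meets the zero set, all chord images share the value $0$ and $F_2(V_q)$ is automatically an interval $[-1/3,\alpha]$. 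Both proofs ultimately pivot on the same geometric fact --- whether $V_q$ meets the arc $\{\theta_2=\pm\pi,\ |\theta_1|\le\theta_0\}$ --- but your treatment of it is uniform: one pigeonhole estimate (step $2\pi/q_1\le\pi<2\theta_0$ along $\theta_2=-\pi$, with the starting abscissa $-q_2\pi/q_1\le 0$ taking care of the constraint $k\ge 0$) disposes of all $q_1\ge 2$ at once, replacing the paper's three slope cases, and the exact level-set characterization plus compactness makes the conclusion $a<1/3$ in the exceptional case airtight, where the paper is looser (it posits the maximizing chord with ``one would expect that $k=k_0$'' and asserts $[0,a]\subset F_2(T_{q,k})$ with only brief justification, and even carries a typo $\max F_3(V_q)$ for $\max F_2(V_q)$). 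What the paper's approach buys in exchange is reusability: its monotonicity machinery also drives Lemmas \ref{L:F1} and \ref{L:F3}, where the endpoints of the range must be located explicitly, whereas your factorization trick is special to $F_2$, whose extreme level sets happen to admit closed-form description --- which is exactly what makes your argument shorter and sharper here.
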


\begin{lemma}\label{L:F3}
\indent
If $q_2=0$ and $q_1>1$ then $$F_3(V_q)=[1/3,F_3(2(l_0+1)\pi/q_1,0)]\cup [F_3(2l_0\pi/q_1,\pi),1].$$ 

If $q_2=0$ and $q_1=1$ then $F_3(V_q)=[2/3,1]$.

Otherwise $F_3(V_q)=[1/3,1]$.
\end{lemma}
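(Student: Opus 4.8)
The plan is to reduce everything to a single monotone function of $(\cos\theta_1,\cos\theta_2)$ and then read off the three cases. Since the left side of \eqref{E:eta_equation} depends on $\theta$ only through $c_1:=\cos\theta_1$ and $c_2:=\cos\theta_2$, I would write $F_3(\theta)=\Phi(c_1,c_2)$, where $\Phi$ denotes the largest root of $g(x):=9x^3-x-(c_1+1)(3x+c_2)$, a continuous function on $[-1,1]^2$. Two computations do the heavy lifting. First, $g(1)=8-(c_1+1)(3+c_2)\ge0$ while $g'>0$ on $[1,\infty)$, and $g(1/3)=-(c_1+1)(1+c_2)\le0$; hence $\tfrac13\le\Phi\le1$ on the whole square. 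Second, differentiating $g(\Phi)=0$ implicitly and using $g_x>0$ at the largest root gives $\partial\Phi/\partial c_1=(3\Phi+c_2)/g_x\ge0$ and $\partial\Phi/\partial c_2=(c_1+1)/g_x\ge0$, so $\Phi$ is nondecreasing in each variable. Finally I would record the two edge profiles obtained by factoring: on $c_2=1$, $g=(3x+1)\bigl(3x^2-x-(c_1+1)\bigr)$ gives $t(c_1):=\Phi(c_1,1)=\tfrac16\bigl(1+\sqrt{1+12(c_1+1)}\bigr)$; on $c_2=-1$, $g=(3x-1)\bigl(3x^2+x-(c_1+1)\bigr)$ gives $b(c_1):=\Phi(c_1,-1)=\max\{\tfrac13,\tfrac16(-1+\sqrt{1+12(c_1+1)})\}$, the second branch being active exactly for $c_1>-1/3$. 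Both are nondecreasing, $t$ runs over $[1/3,1]$, $b$ over $[1/3,2/3]$, and $t(c_1)=b(c_1)+\tfrac13$ whenever $c_1\ge-1/3$.

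For the generic case $q_2\neq0$ I would prove $F_3(V_q)=[1/3,1]$ by a connectedness argument. Using the identities $F(-\theta_1,\theta_2)=F(-\theta_1,-\theta_2)=F(\theta_1,\theta_2)$ quoted in the text, the set $\{(\cos\theta_1,\cos\theta_2):\theta\in V_q\}$ coincides with the image under $(\cos,\cos)$ of the closed geodesic through the origin on $(\mathbb{R}/2\pi\mathbb{Z})^2$ in direction $(q_2,q_1)$, parametrized by $s\mapsto(\cos q_2's,\cos q_1's)$ with $q_i=\gcd(q_1,q_2)q_i'$. This curve is connected, carries $\Phi=1$ at $s=0$ (where $(c_1,c_2)=(1,1)$) and, since $q_2'\ge1$, carries $\Phi=1/3$ at $s=\pi/q_2'$ (where $c_1=\cos\pi=-1$, forcing $\Phi=1/3$ irrespective of $c_2$). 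The intermediate value theorem together with the bound $\Phi\in[1/3,1]$ then yields the full interval. The subcase $q_1=0$ is the instance $q_2'=1$, where the curve is simply $c_2=1$ and $t(\cdot)$ already sweeps $[1/3,1]$.

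For $q_2=0$ the set $V_q$ is the family of vertical lines $\theta_1=2k\pi/q_1$ with $\theta_2$ free, so by monotonicity in $c_2$ each line contributes the interval $[b(c_1^{(k)}),t(c_1^{(k)})]$ with $c_1^{(k)}=\cos(2k\pi/q_1)$; by the reflection $F(-\theta_1,\theta_2)=F(\theta_1,\theta_2)$ it suffices to take $k=0,\dots,K:=\lfloor q_1/2\rfloor$, along which $c_1^{(k)}$ decreases from $1$ toward $-1$. If $q_1=1$ only $k=0$ remains and the image is $[b(1),t(1)]=[2/3,1]$. If $q_1>1$, I would split the indices at the threshold $c_1=-1/3$, letting $l_0$ be the largest $k$ with $c_1^{(k)}\ge-1/3$, equivalently $2k\pi/q_1\le\theta_0=\arccos(-1/3)$, so $l_0=[q_1\theta_0/2\pi]$. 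For $k\le l_0$ the identity $t=b+\tfrac13$ makes each $I_k$ have length exactly $1/3$, and writing $s_k=\sqrt{1+12(c_1^{(k)}+1)}\in[3,5]$ the consecutive bottoms satisfy $b(c_1^{(k)})-b(c_1^{(k+1)})=(s_k-s_{k+1})/6\le2/6=\tfrac13$, so these intervals overlap and merge into $[b(c_1^{(l_0)}),1]$ (the top being $t(1)=1$). For $k\ge l_0+1$ one has $b(c_1^{(k)})=1/3$, so these merge into $[1/3,t(c_1^{(l_0+1)})]$, the largest top occurring at $k=l_0+1$. Assembling the two families gives
$$
F_3(V_q)=\bigl[\tfrac13,\ t(c_1^{(l_0+1)})\bigr]\cup\bigl[b(c_1^{(l_0)}),\ 1\bigr],
$$
which is the asserted expression because $t(c_1^{(l_0+1)})=F_3(2(l_0+1)\pi/q_1,0)$ and $b(c_1^{(l_0)})=F_3(2l_0\pi/q_1,\pi)$, with the two pieces collapsing to $[1/3,1]$ exactly when $t(c_1^{(l_0+1)})\ge b(c_1^{(l_0)})$.

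The routine parts are the factorizations and the monotonicity. The delicate points I would treat with care are: (i) in the generic case, the reduction of $F_3(V_q)$ to the image of the single through-origin geodesic, which rests on the two reflection symmetries to absorb the sign of $k$; and (ii) in the case $q_2=0$, the overlap estimate $s_k\in[3,5]$ that collapses the indices $k\le l_0$ into one interval, together with the verification that $l_0+1\le K$ for every $q_1>1$ (so a genuine lower piece exists), which follows since $c_1^{(K)}=-1<-1/3$ for even $q_1$ and $c_1^{(K)}=-\cos(\pi/q_1)<-1/3$ for odd $q_1\ge3$.
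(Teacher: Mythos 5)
Your proof is correct, and in the hard case it takes a genuinely different route from the paper's. For $q_2=0$ you and the paper follow the same skeleton: decompose $V_q$ into the vertical segments $\theta_1=2k\pi/q_1$, use monotonicity of $F_3$ in each variable, and split the indices at $l_0=[q_1\theta_0/2\pi]$. The only difference there is the merging step for $k\le l_0$: the paper simply observes that each such interval contains the common point $2/3$ (its bottom satisfies $F_3(2k\pi/q_1,\pi)\le F_3(0,\pi)=2/3$ and its top $F_3(2k\pi/q_1,0)\ge F_3(\theta_0,0)=2/3$), so all the intervals overlap at once, whereas you derive the closed-form edge profiles $t,b$ by factoring the cubic and run a quantitative consecutive-overlap estimate via $s_k\in[3,5]$; the paper's common-point trick is cheaper, but your factorizations buy the clean identity $t=b+\tfrac13$ and an explicit, self-contained verification that $l_0+1\le[q_1/2]$ (the paper imports this from the proof of Lemma \ref{L:F1}). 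The real divergence is at $q_2\neq0$: the paper runs a case analysis on the slope ($q_1/q_2$ versus $\pi/\theta_0$, then $q_1\ge4,\,q_2>1$ versus $q_2=1,\,q_1>1$, plus the $q_1\le1$ subcases), constructing auxiliary segments $T_{q,k_0}$ and chaining interval inclusions; you replace all of this with a single intermediate-value argument along the through-origin geodesic, which carries the value $1$ at $\theta=(0,0)$ and the value $1/3$ at $\theta_1=\pi$, combined with the a priori bound $\Phi\in[1/3,1]$ — shorter, uniform in $q$, and it makes transparent why no gap can open when $q_2\neq0$. Two small points to tidy, neither a gap: the word ``coincides'' is too strong when $\gcd(q_1,q_2)>1$, since the through-origin geodesic meets only the lines $t_{q,k}$ with $k$ divisible by the gcd, but your argument only needs the inclusion of its $F_3$-values in $F_3(V_q)$, which the reflection symmetries give; and the implicit-differentiation formula for $\partial\Phi/\partial c_i$ degenerates at parameters where the top two roots collide (e.g. at $(\theta_0,\pi)$, where $g_x=0$ at the largest root), so monotonicity should be closed off by continuity of $\Phi$ there — the paper's geometric intersection picture carries the same implicit caveat.
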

Taking into account that $F(B_p)=F(V_q)$ where $q=(|p_1|,|p_2|)$, we summarize results of above lemmas in Fig. \ref{F:range}
\begin{figure}[ht!]
\includegraphics[width=84mm]{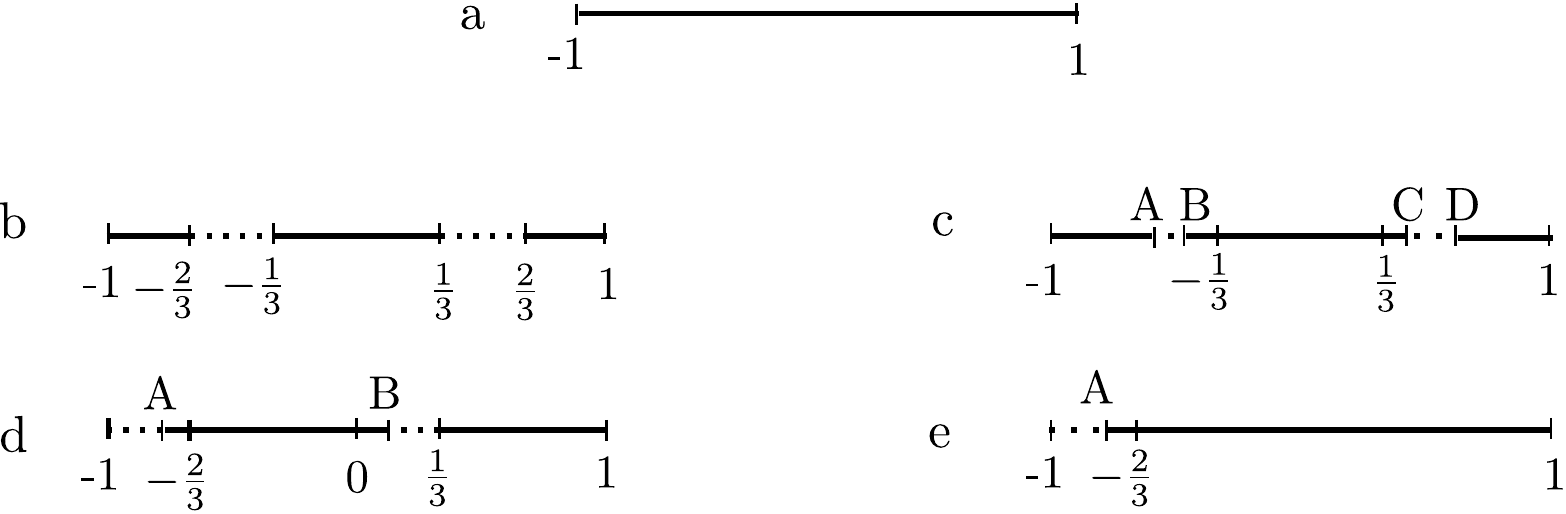} 
\caption{Functions $F_j, j=\overline{1,3}$ have values in $[-1,1]$. 
The bold segments are $F(\theta)$.
The dotted intervals do not belong to the union of the ranges of $F_j, j=\overline{1,3}$.
a) $p_2\neq 0$ even, 
b) $p_2=0$, $p_1=\pm1$,
c) $p_2=0$, $|p_1|>1$, 
d) $p_2$ odd, $|p_1|\leq 1$,
e) $p_2$ odd, $|p_1|>1$} \label{F:range}
\end{figure}

Notice that in case c ($p_2=0$ and $|p_1|>1$) either or both dotted intervals may vanish.

As it was pointed out before, $\Sigma^D$ belongs to the pure point spectrum of $H_p$.
Extra pure point spectrum appears if some non-constant branch of the dispersion relation of $H$ has constant restriction on $B_p$. 
This can happen only on the linear level sets of functions $F_j, j=\overline{1,3}$ inside $B$. 
In \cite[Proposition 10]{Do_Kuch} we described all such sets:
\begin{enumerate}[i.]
          \item $A_1=\{(\pm\pi,\theta_2), \theta_2\in [-\pi,\pi]\},$
          \item $A_2=\{(\theta_1,\pm\pi),\theta_1\in [-\pi,\pi]\},$
          \item $A_3=\{(\theta_1,0),\theta_1\in [-\pi,\pi]\},$
          \item $A_4=\{(\theta_1,\pm\pi/2),\theta_1\in [-\pi,\pi]\}$.
\end{enumerate}

Let us now deal with the additional pure point spectrum that arises due to the presence of linear level sets.
At first, we will build compactly supported eigenfunctions corresponding to those extra eigenvalues.
We then prove that these functions generate the whole corresponding eigenspaces.

We look at the first linear level set 
$A_1=\{(\pm\pi, \theta_2), \theta_2\in [-\pi,\pi]\}$.
Let $p=(2N,0)$ for some postivive integer $N$, then $A_1$ can be rewritten as $\{ \theta| p\theta\pm 2N\pi=0\}$.
On this line $F_1(\theta)=-1/3, F_2(\theta)=0$ and $F_3(\theta)=1/3$.

Let us recall that for each $\lambda\notin \Sigma^D$ functions $\varphi_0,\varphi_1$ are two linearly independent solutions of the equation
\begin{equation}
-\frac{d^2u}{dx^2}+q_0(x)u=\lambda u
\end{equation}
such that $\varphi_{0,\lambda}(0)=\varphi_{1,\lambda}(1)=1, \varphi_{0,\lambda}(1)=\varphi_{1,\lambda}(0)=0.$
Also, function $\eta(\lambda)$ is defined as following $\eta(\lambda):=\varphi'_{1,\lambda}(1)/\varphi'_{1,\lambda}(0)$.
Then (\cite[Lemma 3]{Do_Kuch}) $\lambda$ is in the spectrum of $H$ if and only if there exists $\theta=(\theta_1,\theta_2)\in B$ such that 
$\eta(\lambda)= F_j(\theta)$ for some $j\in\overline{1,3}$.
We will first consider those values of $\lambda$ such that $\eta(\lambda)=F_2(\theta)=0$, i.e. 
$$\frac{\varphi_{1,\lambda}'(1)}{\varphi_{1,\lambda}'(0)}=0 \text{ or } \varphi_{1,\lambda}'(1)=0.$$
As it was said before, we now build a compactly supported eigenfunction, namely $g(x)$, for $H_{(2N,0)}$ corresponding to these $\lambda$.
On four edges directed toward the vertex $A$ let function $g$ to be equal to $\varphi_1$
while on four edges directed toward $B$ and $C$ (two edges per each vertex) define $g$ be equal to $-\varphi_1$ (see Fig. \ref{F:piece_1}).
One can easily check that the Neumann boundary conditions satisfied at vertex $A$.
\begin{figure}[ht!]
\includegraphics[width=84mm]{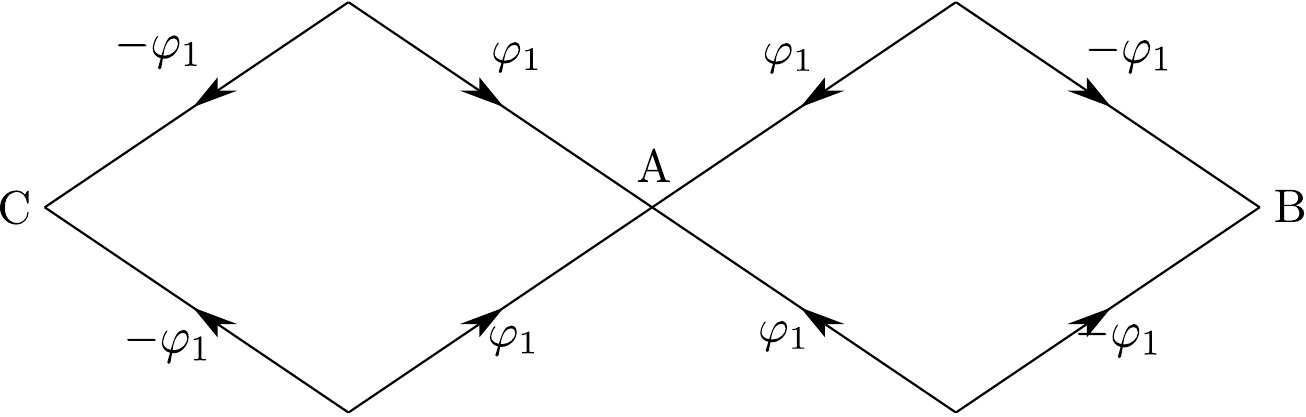}\caption{Piece of rhombus bracelet function corresponding to $p=(2N,0)$ and $\eta(\lambda)=0$}\label{F:piece_1}
\end{figure}

We now extend this piece of function $g$ to the whole nanotube by repeating it $N$ times horizontally.
Outside this band of rhombuses around the nanotube, function $g$ is defined to be equal to zero.
Then $g$ is periodic with period $2e_1$ and satisfies the Neumann boundary conditions at all vertices.
Thus it is a compactly supported eigenfunction for the nanotube $T_{(2N,0)}$ corresponding to those $\lambda$ such that $\eta(\lambda)=0$.
We will call the constructed function as \emph{rhombus bracelet function}.

In Fig. \ref{F:figure_2}, \ref{F:figure_3}, \ref{F:figure_4} one can find similar functions built on a piece of the nanotube structure,
extensions of which will serve as the compactly supported eigenfunctions corresponding to the additional eigenvalues.
\begin{figure}[ht!]
\includegraphics[width=84mm]{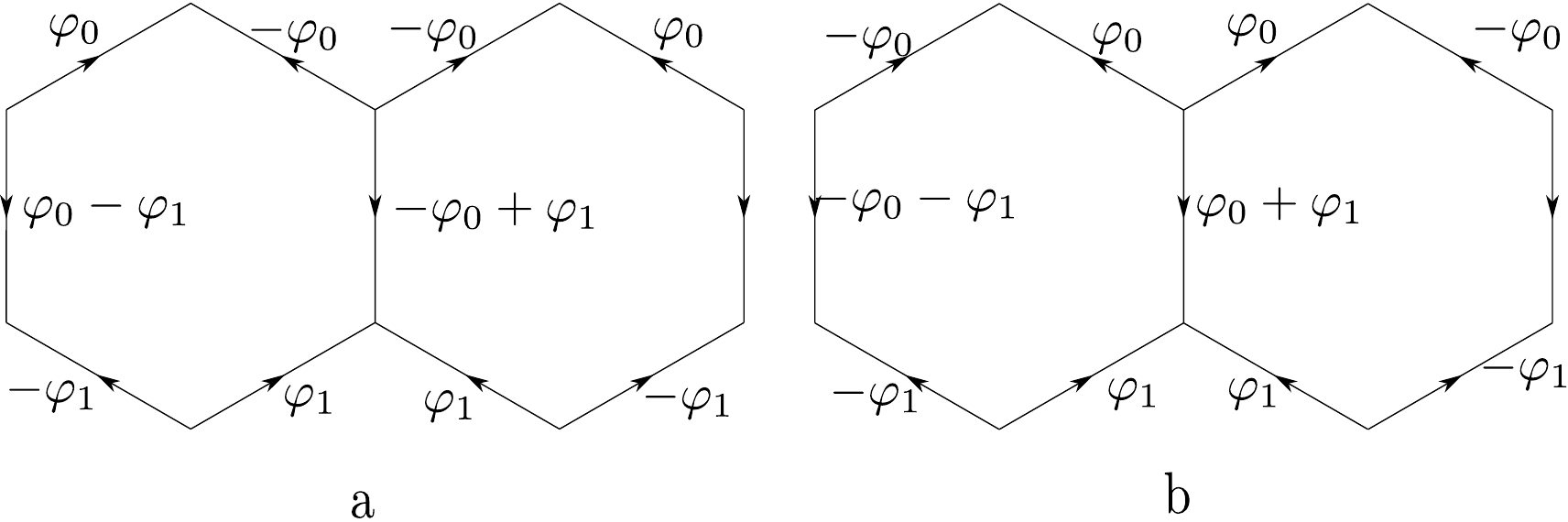}\caption{Piece of hexagon bracelet functions of type a ($\eta(\lambda)=-1/3$) or type b ($\eta(\lambda)=-1/3$) in case $\displaystyle p=(2N,0)$}\label{F:figure_2}
\end{figure}
\begin{figure}[ht!]
\includegraphics[width=84mm]{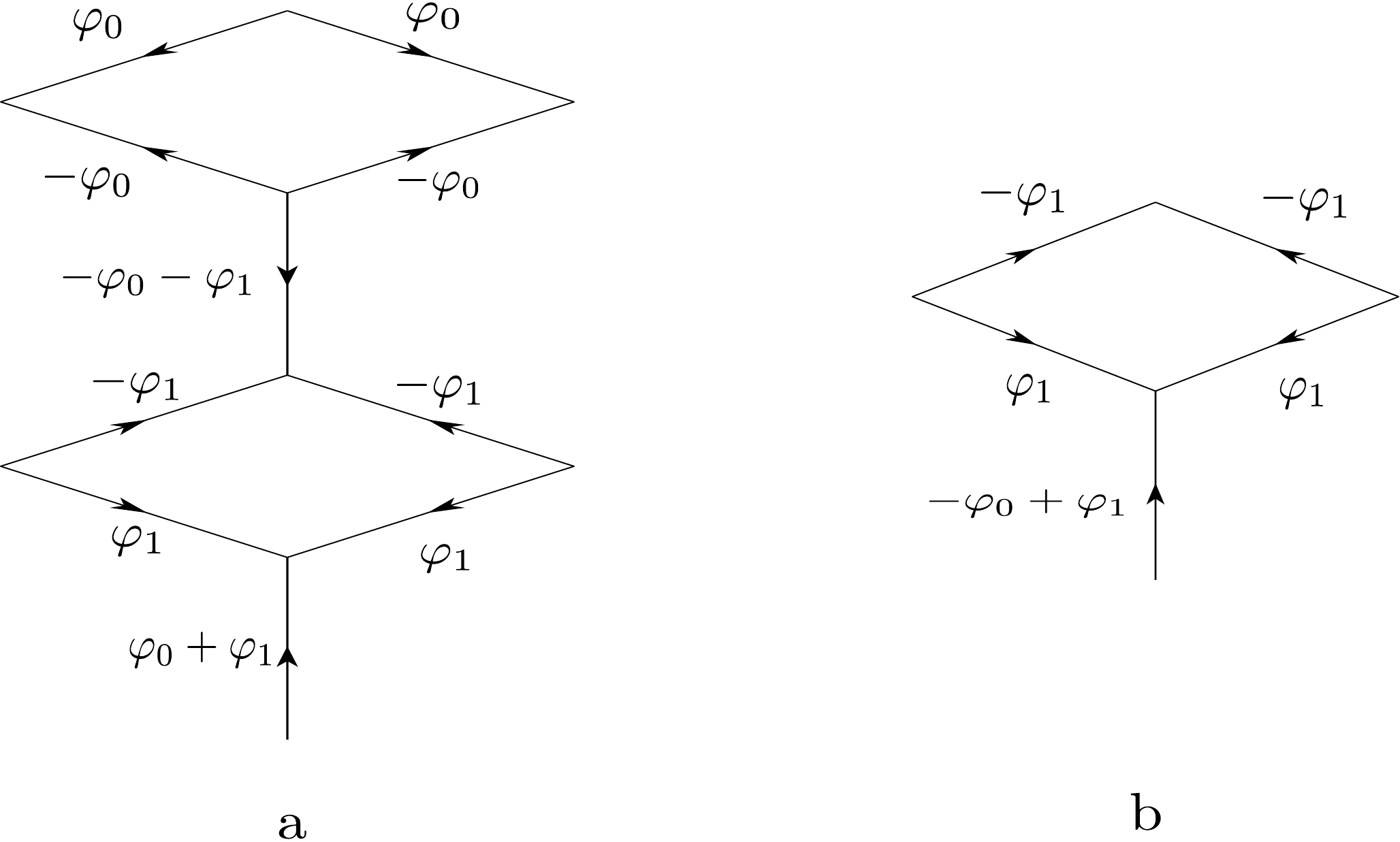}\caption{a. Piece of mushroom function in case $p=(0,N)$ for $N$ - a multiple of $2$ and $\eta(\lambda)=1/3$ and 
                                                 b. piece of flower function in case $p=(0,N)$ for integer $N$ and $\eta(\lambda)=-1/3$}\label{F:figure_3}
\end{figure}
\begin{figure}[ht!]
\includegraphics[width=84mm]{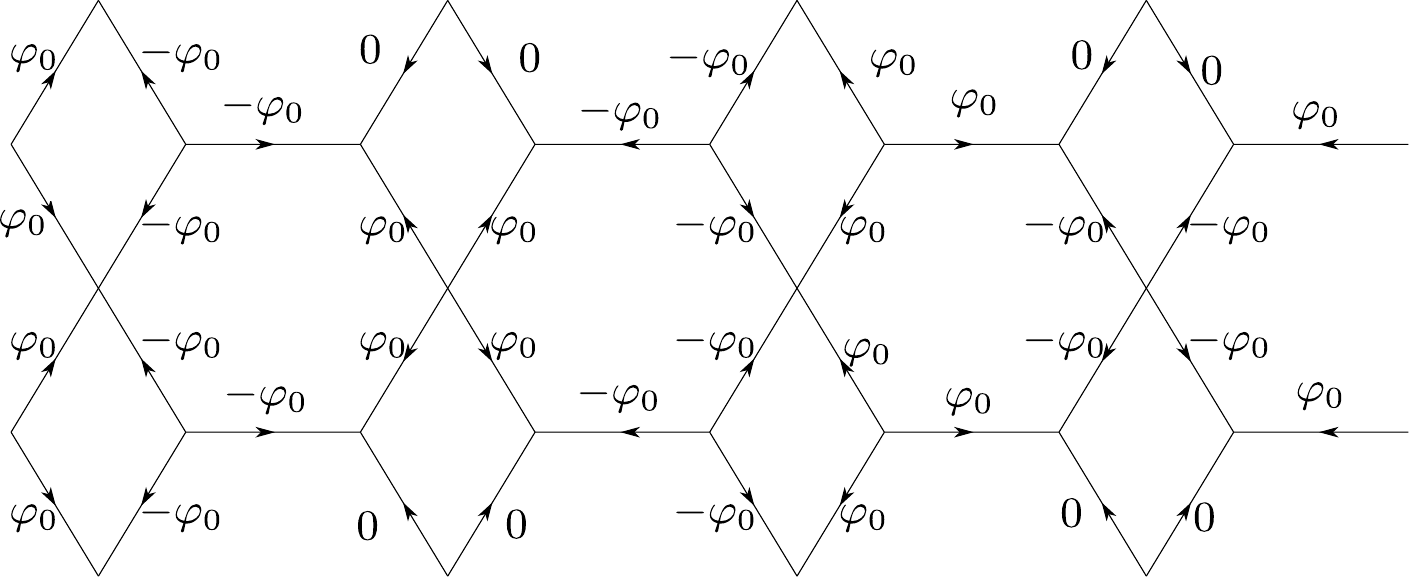}\caption{Piece of double-band function in case $p=(0,N)$ for $N$ - a multiple of $4$ and $\eta(\lambda)=0$}\label{F:figure_4}
\end{figure}
More specifically, in case $p=(2N,0)$ for some nonzero integer $N$, we need to repeat the piece of functions in Fig. \ref{F:figure_2} $N$ times horizontally to obtain a band of hexagons and outside this band functions are defined to be zero.
Then we have compactly supported eigenfunctions corresponding to $\lambda$ such that $\eta(\lambda)=F_1(\theta)=-1/3$ (on the left) and $\eta(\lambda)=F_3(\theta)=1/3$ (on the right).
For the second linear level set $A_2=\{(\theta_1,\pm\pi),\theta_1\in [-\pi,\pi]\}$, we have $p=(0,2N)$ where $N$ is some nonzero integer. 
In this case we has to repeat function from Fig. \ref{F:figure_3}a  $N$ times vertically and beyond that function is defined to be equal to zero.
The obtained function is a compactly supported eigenfunction corresponding to $\lambda$ such that $\eta(\lambda)=1/3$. 
Analogously, $p=(0,N)$ corresponds to the third linear level set $A_3=\{(\theta_1,0),\theta_1\in [-\pi,\pi]\}$. 
One first needs to repeat the piece of function from Fig. \ref{F:figure_3}b $N$ times vertically and then make it equal to zero beyond that in order to get a compactly supported eigenfunction corresponding to $\lambda$ with $\eta(\lambda)=-1/3$. 
The last linear level set, $A_4=\{(\theta_1,\pm\pi/2),\theta_1\in [-\pi,\pi]\}$, corresponds to $p=(0,4N)$ where $N$ is some nonzero integer. 
One again has to repeat the piece of function from Fig. \ref{F:figure_4} $N$ times horizontally and make it equal to zero outside the double-band in order to obtain eigenfunction corresponding to eigenvalues $\lambda$ with $\eta(\lambda)=0$. 
We accordingly call functions in Fig. \ref{F:figure_2} \emph{hexagon bracelet functions of type a and b}, 
in Fig. \ref{F:figure_3} - \emph{mushroom function} and \emph{flower function} accordingly, 
and in Fig. \ref{F:figure_4} - \emph{double-band function}. 

One still needs to prove that these functions generate the whole corresponding eigenspaces. 

Indeed, let $g$ be a compactly supported eigenfunction of $H_{(2N,0)}$ corresponding to those eigenvalues $\lambda$ such that $\eta(\lambda)=0$ and
$P$ - the lowest point on the boundary of the support of $g$ (think about the nanotube $T_{(2N,0)}$ as a vertical tube). 
Since the nanotube structure is periodic, without loss of generality, 
P can be one of three points $D, E$ or $I$ shown in the Fig. \ref{F:figure_1c}.

\begin{figure}[ht!]
\includegraphics[width=39mm]{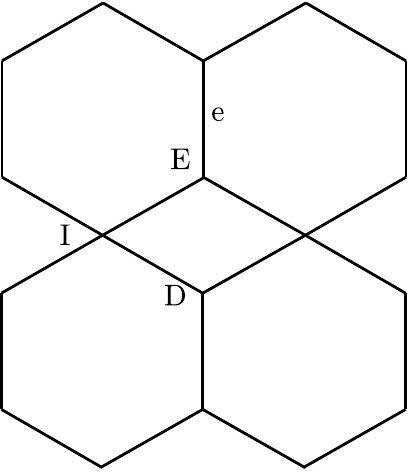}
\caption{Three possible locations ($D, E$ or $I$) of the lowest point on the boundary of the compactly supported eigenfunction in case $p=(2N,0)$}
\label{F:figure_1c}
\end{figure}

The point $P$ cannot coincide with $E$ since it would make $g|_e(E)=g'|_e(E)=0$, and as a result $g|_e\equiv0$ - contradiction. 

Fig. \ref{F:PconcideF} shows what we obtain when trying to construct the compactly supported eigenfunction using Neuman vertex condition in case $P$ coincides with $I$.

\begin{figure}[ht!]
\includegraphics[width=84mm]{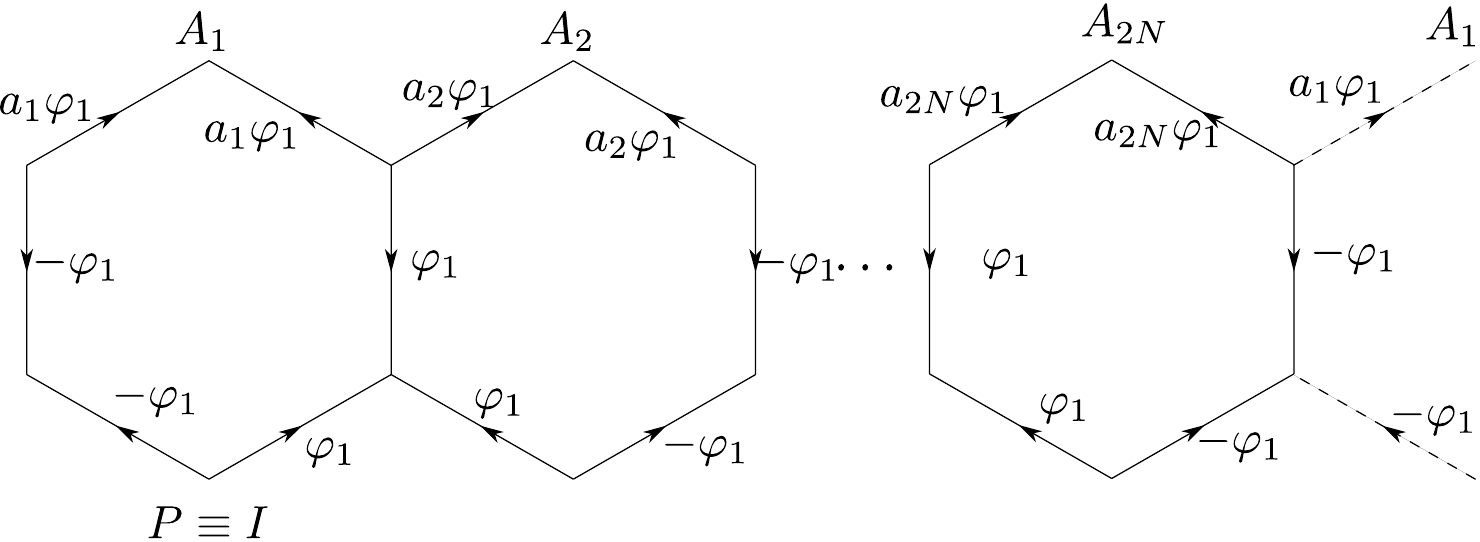}
\caption{Construction of a compactly supported eigenfunction (if such a function exists) when $P$ concides with $I$ in case $p=(2N,0)$ and $\eta(\lambda)=0$}
\label{F:PconcideF}
\end{figure}

From the Neumann boundary condition at vertices $A_j, j=\overline{1,2N}$ we have 
$$a_1+a_2+1=0,$$
$$a_2+a_3-1=0,$$
$$\ldots$$
$$a_{2N-1}+a_{2N}+1=0.$$
$$a_{2N}+a_1-1=0.$$
The sum of $1^{st}, \ldots, {(2N-1)}^{th}$ formulas gives us $a_1+a_2+\ldots+a_{2N}+n=0$ while the sum of $2^{nd}, \ldots, (2N)^{th}$ formulas gives us $a_1+\ldots+a_{2N}-n=0$, 
which lead to contradiction. 
Thus $I$ cannot be the lowest point on the support of function $g$. 

Therefore $D$ must be the lowest point on the boundary of the support of function $g$. 
Extending from $D$ a function with band of rhombuses support and substracting it from $g$, we will get a new function with smaller support. 
Again the lowest point of the new support (provided that the new function is nonzero) must be ``another'' point $D$. 
Continuing this procedure we will eventually get zero function, thus $g$ is a combination of rhombus bracelet eigenfunctions. 

In case $p=(2N,0)$, $N$ - nonzero integer, and $\eta(\lambda)=F_1(\theta)=-1/3$ or $\eta(\lambda)=F_3(\theta)=1/3$ we use the same technique. 
The only difference is the lowest point on the boundary is now $I$. 
Point $E$ is eliminated by the same reason as before and point $D$ is excluded by contradiction obtained from Fig. \ref{F:contradiction}.
\begin{figure}[ht!]
\includegraphics[width=84mm]{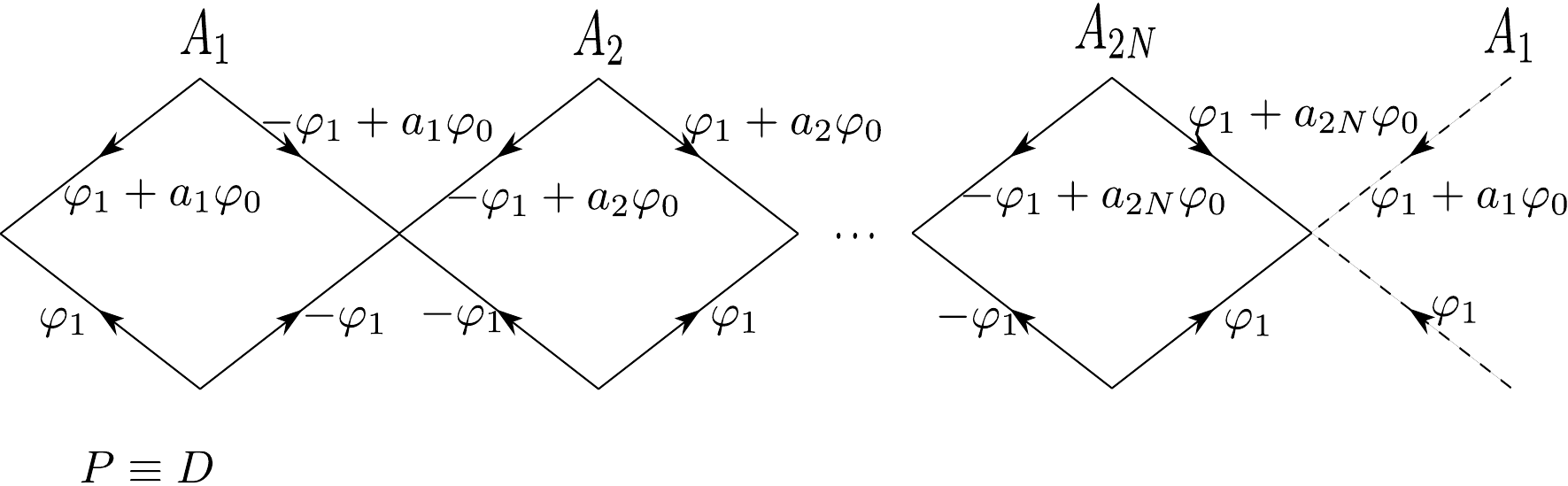}
\caption{Construction of a compactly supported eigenfunction (if such a function exists) when $P$ concides with $D$ in case $p=(2N,0)$ and $\eta(\lambda)=\pm 1/3$}
\label{F:contradiction}
\end{figure} 

In all other cases we also use the ``lowest point'' argument. 
The symmetry of the structure once again follows that the lowest point on the boundary of compactly supported eigenfunction can locate at $D$, $E$ or $I$ (see Fig. \ref{F:lowest_point}). 
\begin{figure}[ht!]
\includegraphics[width=39mm]{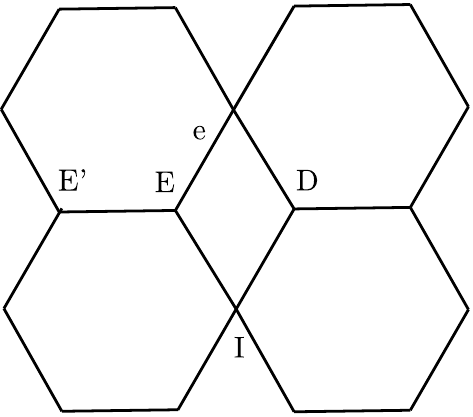}
\caption{Three possible locations of the lowest point on the boundary of the compactly supported eigenfunctions in case $p=(0,N)$ for $N\in\mathbb{Z}$}
\label{F:lowest_point}
\end{figure} 
Both $D$ and $E$ are excluded by the same reason. 
For instance, if $E$ is the lowest point, then $g|_e(E)=g|_e(E')=0$ (otherwise due to the first Neumann boundary condition $E$ will not be the lowest point). 
But then since $g(E)=g(E')=0$ we have $g|_e\equiv 0$, which leads to contradiction. 
The lowest point therefore should be $I$. 

The eliminating process for $p=(0,N), N\in\mathbb{Z}$, $\eta(\lambda)=-1/3$ or $p=(0,2N), N\in\mathbb{Z}$, $\eta(\lambda)=1/3$ occurs exactly the same as in the previous cases.

Now we consider the case when $p=(0,4N), N\in\mathbb{Z}$, $\eta(\lambda)=0$. 
We claim that there does not exist a compactly supported eigenfunction of height $|e_1|$. 
Indeed, suppose the contrary, Fig. \ref{F:double_circle_contradiction} shows what we obtain when constructing such a function. 
But then there does not exist $a$ such that the Neumann boundary conditions satified at both points $A$ and $B$.
\begin{figure}[ht!]
\includegraphics[width=39mm]{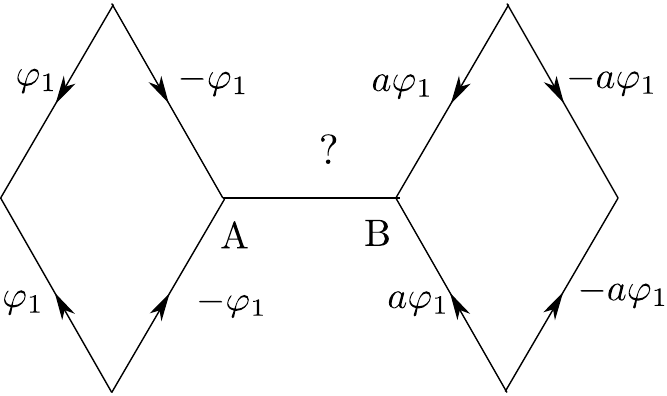}
\caption{Situation when trying to construct a compactly supported eigenfunction of height $|e_1|$}
\label{F:double_circle_contradiction}
\end{figure}
This claim follows that any compactly supported eigenfunction corresponding to $\lambda$ with $\eta(\lambda)=0$ has at least height $2|e_1|$. 

The eliminating process would then be similar to what happens before. 
(Since the minimum height of any nonzero compactly supported eigenfunction is $2|e_1|$, when we substract function of double-band type from the original eigenfunction, there is no need to worry that the support of the new eigenfunction will exceed the old one's.)

Let $\Sigma^0$ be the extra pure point spectrum which occurs due to the linear level set(s) of function $F$. 
Recall that $D(\lambda)=2\eta(\lambda)$.
Then the above argument proves the following:
\begin{lemma}
\indent\label{L:extra_pps}
  \begin{enumerate}
	\item If $p=(2N,0)$ for some nonzero $N\in\mathbb{Z}$ then $\Sigma^0=D^{-1}(\{\pm 2/3,0\})$.
		       
	      Eigenspace corresponding to $\lambda$ with $D(\lambda)=0$ is generated by rhombus bracelet functions. 					
	      Eigenspaces corresponding to $\lambda$ with $D(\lambda)=-2/3$ or $2/3$ are generated by hexagon bracelet functions of type a and b accordingly. 
	\item If $p=(0,N)$ for some odd $N$ then $\Sigma^0=D^{-1}(\{-2/3\})$. 
	      \\
	      If $p=(0,N)$ for some $N$ which is a multiple of $2$ but not a multiple of $4$ then $\Sigma^0=D^{-1}(\{\pm 2/3\})$. 
	      \\
	      If $p=(0,N)$ for some $N$ which is a multiple of $4$ then $\Sigma^0=D^{-1}(\{\pm 2/3,0\})$. 
		        
	      In all cases, the eigenspace corresponding to $\lambda$ with $D(\lambda)=-2/3$ is generated by flower functions.
	      The eigenspace corresponding to $\lambda$ with $D(\lambda)=2/3$ is generated by mushroom functions 
	      and the one which corresponds to $\lambda$ with $D(\lambda)=0$ is generated by double-band functions.
	\end{enumerate}
\end{lemma}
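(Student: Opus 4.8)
The plan is to reduce the lemma to two elementary determinations and then invoke the constructions and the lowest-point argument already set up above. Recall that extra pure point spectrum can only arise when a non-constant branch of the dispersion relation becomes constant along some line contained in $B_p$, and that by \cite[Proposition 10]{Do_Kuch} the only lines along which an $F_j$ is constant are the four families $A_1,\dots,A_4$. Hence $\Sigma^0$ is completely pinned down once I know (a) which of these four level sets are contained in $B_p$, and (b) the constant value that the relevant branch of $F$ takes on each of them; exhaustiveness of the list $A_1,\dots,A_4$ then guarantees there are no further extra eigenvalues.

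For (a) I would write $B_p$ as the union of the lines $p_1\theta_1+p_2\theta_2=2k\pi$, $k\in\mathbb{Z}$, inside $B$, and ask which coincides with an $A_i$. For $p=(2N,0)$ these are the vertical lines $\theta_1=k\pi/N$; among them only $\theta_1=\pm\pi$ (i.e.\ $k=\pm N$) is a level set, namely $A_1$, and no other $A_i$ is captured. For $p=(0,N)$ these are the horizontal lines $\theta_2=2k\pi/N$; the value $\theta_2=0$ (i.e.\ $A_3$) occurs for every $N$, the value $\theta_2=\pm\pi$ (i.e.\ $A_2$) occurs exactly when $N$ is even, and the value $\theta_2=\pm\pi/2$ (i.e.\ $A_4$) occurs exactly when $4\mid N$. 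This already produces the case split of the lemma.

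For (b) I would substitute the relevant value of $\cos\theta_2$ (or $\cos\theta_1$) into (\ref{E:eta_equation}). On $A_1$ one has $\cos\theta_1=-1$, so the factor $(\cos\theta_1+1)$ vanishes and the equation collapses to $9x^3-x=0$; all three roots are then constant, $F_1=-1/3$, $F_2=0$, $F_3=1/3$. On $A_2$, $A_3$, $A_4$ one has $\cos\theta_2=-1,1,0$ respectively, and a one-line check shows that $x=1/3$, $x=-1/3$, $x=0$ is in each case a $\theta_1$-independent root; thus along each of these three lines exactly one branch of $F$ is constant, with value $1/3$, $-1/3$, $0$ respectively (the other two roots still depend on $\theta_1$). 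Using $\eta(\lambda)=F_j(\theta)$ together with $D(\lambda)=2\eta(\lambda)$, each captured level set contributes precisely the eigenvalues in $D^{-1}$ of twice its constant value(s): $A_1$ contributes $D^{-1}(\{\pm2/3,0\})$, $A_2$ contributes $D^{-1}(\{2/3\})$, $A_3$ contributes $D^{-1}(\{-2/3\})$, and $A_4$ contributes $D^{-1}(\{0\})$. Assembling these according to the case split of (a) yields exactly the four displayed formulas for $\Sigma^0$.

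It remains to identify the eigenspaces. Existence of the eigenvalues is immediate: the rhombus and hexagon bracelet functions (for $p=(2N,0)$) and the mushroom, flower, and double-band functions (for $p=(0,N)$) are, by construction, compactly supported solutions of $H_pu=\lambda u$ obeying the Neumann conditions precisely at the values of $\eta(\lambda)$ found above, attached to the level sets as indicated. Completeness — that these functions \emph{generate} the full eigenspaces — is the substantive part, supplied by the lowest-point argument: for an arbitrary compactly supported eigenfunction $g$ one inspects the lowest vertex $P$ on the boundary of its support, eliminates the vertex type $E$ (there $g|_e(E)=g|_e'(E)=0$, forcing $g|_e\equiv0$) and the remaining inadmissible type (by summing the circumferential Neumann balance relations to a numerical contradiction), leaving a unique admissible vertex from which one named generator can be subtracted to strictly reduce the support; iterating sends $g$ to $0$. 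I expect the main difficulty to be the double-band case $p=(0,4N)$, $\eta(\lambda)=0$: there the subtraction induction is only valid after first establishing the auxiliary bound that no nonzero compactly supported eigenfunction has height $|e_1|$ (the Neumann conditions at the two ends of a single band being incompatible), so that every such eigenfunction has height at least $2|e_1|$ and the support genuinely decreases at each step.
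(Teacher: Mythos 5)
Your proposal is correct and follows essentially the same route as the paper: identify which of the linear level sets $A_1,\dots,A_4$ from \cite[Proposition 10]{Do_Kuch} are contained among the lines of $B_p$ (yielding the same case split in $N$), read off the constant values of the branches of $F$ (the paper states these directly, while you verify them by substituting $\cos\theta_1=-1$ or $\cos\theta_2=-1,1,0$ into (\ref{E:eta_equation}) --- a harmless self-contained check), use the bracelet, mushroom, flower and double-band constructions for existence, and prove completeness by the lowest-point elimination-and-subtraction induction, including the key auxiliary bound that in the case $p=(0,4N)$, $\eta(\lambda)=0$ no nonzero compactly supported eigenfunction has height $|e_1|$, so the support strictly decreases at each subtraction step. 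The only cosmetic difference is that your generic description of the vertex eliminations compresses the paper's case-by-case treatment (Cauchy-data vanishing at $E$, circumferential Neumann sums at $I$ or $D$ depending on whether $\eta(\lambda)=0$ or $\pm 1/3$), but the mechanisms you invoke are exactly the paper's.
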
	
We are now ready to formulate the main result about the spectra of carbon nanotubes.
Let us first recall some necessary notations.
Vector $pe:=p_1 e_1+ p_2 e_2$ is the translation vector that defines the nanotube $T_p$.
Function $q_0$ is an $L_2$-function on $[0,1]$.
The Hamiltonian $H_p$ is defined on $L_2(T_p)$ with potential $q$ transfered from $q_0$ to each edge.
$B_p$ is the subset of the Brillouin zone $B$ as defined in (\ref{E:Bp}).
The Hill operator $H^{per}$ has potential $q_{per}$, which is periodic extension of $q_0$ and $D(\lambda)$ - its discriminant.
By simple loop state we mean an eigenfunction of operator $H$ with Dirichlet boundary condition whose support is a hexagon or rhombus (see Fig. \ref{F:loopstate}). 
We also introduce so-called tube loop eigenfunction, support of which is a loop of edges around the tube. 
\begin{figure}[ht!]
\includegraphics[width=84mm]{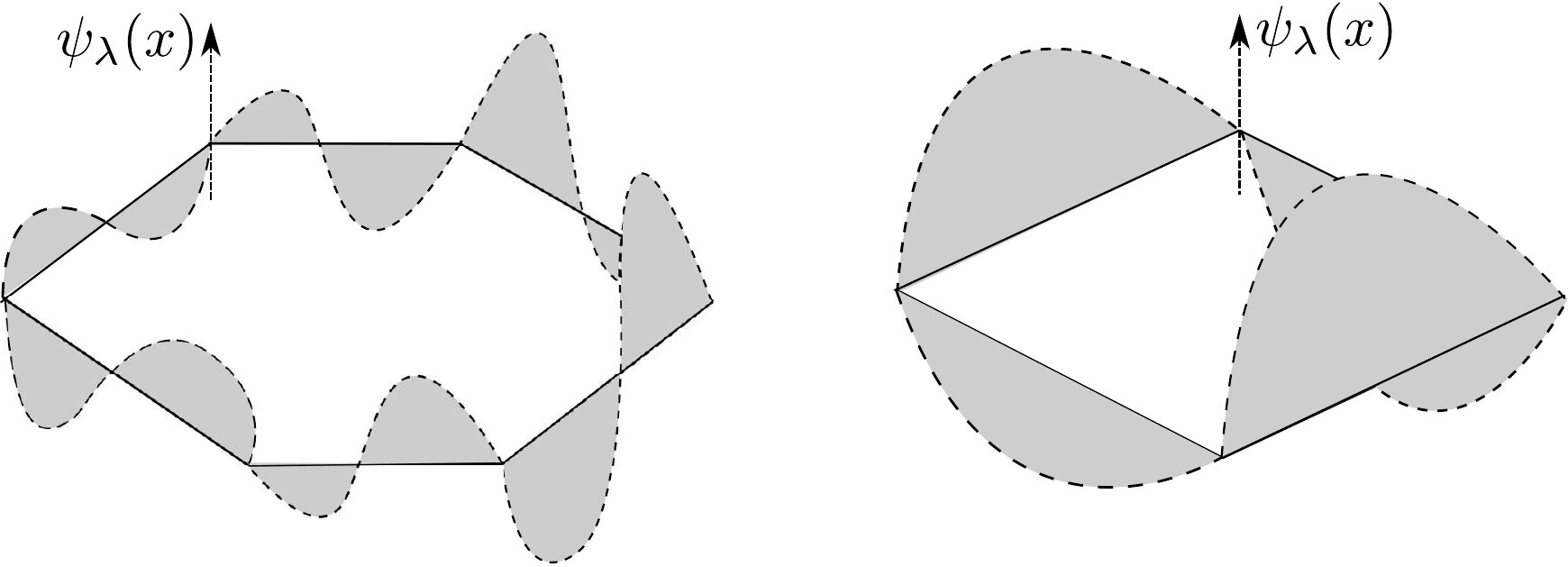}
\caption{Simple loop states constructed from odd function on $[0,1]$ for hexagon and even function on $[0,1]$ for rhombus}\label{F:loopstate}
\end{figure}
\begin{theorem} \label{T:main}
\indent
\begin{enumerate}

    \item The singular continuous spectrum $\sigma_{sc}(H_p)$ is empty.
    
    \item The nonconstant part of the dispersion relation for Hamiltonian $H_p$ is described by the following formula 
    \begin{equation}
    D(\lambda)\in 2F(\theta), \theta\in B_p \text{ or }\lambda\in D^{-1}(2F(\theta)), \theta\in B_p.
    \label{E:dispersion_relation}
    \end{equation}
    
    \item The absolutely continuous spectrum $\sigma_{ac}(H_p)$ has band gap structure. 
		All bands do not overlap. 
		If $p_2$ is nonzero and even then $\sigma_{ac}(H_p)=\sigma(H^{per})$.
    Otherwise there may be additional gaps opened inside spectral bands $H^{per}$.
		In particular
		\begin{enumerate}[i.]
		  \item If $p_2=0$ and $p_1=\pm 1$ then $\sigma_{ac}(H_p)=D^{-1}(\{[-2,-4/3]\cup[-2/3,2/3]\cup[4/3,2]\})$. 
			There is two gaps opened in each spectral band of $H^{per}$.
		
			\item If $p_2=0$ and $p_1\neq \pm 1$ then $\sigma_{ac}(H_p)=D^{-1}(A)$ where 
			$A=[-2,2F_1(2l_0\pi/p_1,0)]\cup[2F_3(2l_0\pi/p_1,\pi),2]\cup$
			\\
			$\cup[2F_1(2(l_0+1)\pi/p_1,\pi),2F_3(2(l_0+1)\pi/p_1,0)].$
			\\
			There are at most two gaps opened in each band of $H^{per}$ depending on whether the following inequalities are true or not
			$F_1(2l_0\pi/p_1,0)<F_1(2(l_0+1)\pi/p_1,\pi)$ \\
			and $F_3(2(l_0+1)\pi/p_1,0)<F_3(2l_0\pi/p_1,\pi)$.
			
		  \item If $p_2$ is odd and $|p_1|\leq 1$ then $\sigma_{ac}(H_p)=D^{-1}(B)$ where
			$B=[2\min F_1(B_p),2\max F_2(B_p)]\cup [2/3,2]$. 
			There is always two gaps opened in each band of $H^{per}$.
			
			\item If $p_2$ is odd and $|p_1|> 1$ then $\sigma_{ac}(H_p)=D^{-1}(C)$ where
			$C=[2\min F_1(B_p),2]$. Only one gap is opened in each band of $H^{per}$. 
		\end{enumerate}
    
    \item The pure point spectrum of $H_p$ contains the pure point spectrum of the Hamiltonian $H$. 
                \begin{enumerate}[i.]
		  \item If $p$ is not of the form $(2N,0)$ or $(0,N)$ for some nonzero integer $N$ then these two sets concide.
	                Eigenvalues from $\Sigma^D$ are of infinite multiplicity and the corresponding eigenspaces are spanned by simple loop state eigenfunctions and tube loop eigenfunctions 
	     	    	   
                  \item If $p=(2N,0)$ or $(0,N)$ for some nonzero $N$, besides $\Sigma^D$ nanotube operator $H_p$ has extra pure point spectrum which is denoted as $\Sigma^0$.
		        All eigenvalues are of infinite multiplicity.
		        Description of these extra eigenvalues and their corresponding eigenspaces are provided in lemma \ref{L:extra_pps}. 
		\end{enumerate}        
\end{enumerate}
\end{theorem}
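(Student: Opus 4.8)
The entire argument rests on the fibered structure $H_p=\int_{B_p}^\oplus H^\theta\,d\theta$ and the resulting spectral identity (\ref{E:restricted_spectrum}), so the plan is to transport the complete spectral description of the full graphyne operator $H$ given in \cite[Theorem 7]{Do_Kuch} through the restriction of the quasimomentum from $B$ to $B_p$. Part (2) is then essentially a definition-chase: the text already records that the dispersion relation of $H_p$ is that of $H$ restricted to $B_p$, and \cite[Theorem 7]{Do_Kuch}(2.i) identifies the nonconstant part of the latter with the locus $0.5D(\lambda)\in F(\theta)$, so restricting $\theta$ to $B_p$ yields exactly (\ref{E:dispersion_relation}). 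For Part (1), I would note that $B_p$ is a finite union of real-analytic arcs (segments of the lines in (\ref{E:Bp})), and along each such arc every Floquet eigenvalue $\lambda_n(\theta)$ of $H^\theta$ is a real-analytic function of the arclength parameter. On each arc a band function is therefore either constant, contributing to the pure point spectrum, or nonconstant and real-analytic, contributing purely absolutely continuous spectrum; the standard analyticity argument leaves no room for a singular continuous part, so $\sigma_{sc}(H_p)=\emptyset$.

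Part (3) is the computational heart of the proof and proceeds by assembling the three range lemmas. By (\ref{E:restricted_spectrum}) together with \cite[Theorem 7]{Do_Kuch}(2.i),(3), the nonconstant spectrum is $D^{-1}\bigl(2F(B_p)\bigr)$, and the reduction $F(B_p)=F(V_q)$ with $q=(|p_1|,|p_2|)$ established above lets me replace $F(B_p)$ by $F(V_q)=F_1(V_q)\cup F_2(V_q)\cup F_3(V_q)$. I would then run through the five geometric cases, reading off $F_j(V_q)$ from Lemmas \ref{L:F1}, \ref{L:F2}, \ref{L:F3} and taking the union: for $p_2\neq0$ even the three ranges tile $[-1,1]$, so $2F(V_q)=[-2,2]$ and $\sigma_{ac}(H_p)=\sigma(H^{per})$; in the remaining cases the union omits one or two subintervals of $[-1,1]$, and these omissions are exactly the sets $A$, $B$, $C$ and the two-gap configuration listed in (3.i)--(3.iv). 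Applying $D^{-1}(2\,\cdot)$ produces the band-gap picture: on each band of $\sigma(H^{per})$ the discriminant $D$ is a monotone bijection onto $[-2,2]$, so the preimage of a union of subintervals is again a union of subintervals, the omitted values of $F(V_q)$ become the newly opened gaps, and non-overlapping of the resulting bands is inherited from \cite[Theorem 7]{Do_Kuch}(4).

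Part (4) splits the pure point spectrum into the part inherited from $H$ and the part created by folding. First, every simple loop state is a periodic Dirichlet eigenfunction supported on a single hexagon or rhombus, hence descends to $T_p$ and remains a Neumann eigenfunction; these already furnish all of $\Sigma^D$, giving the containment $\sigma_{pp}(H)\subseteq\sigma_{pp}(H_p)$, while the tube loop eigenfunctions running once around the tube enlarge the corresponding eigenspaces without adding new eigenvalues. To see that nothing else can occur except through the linear level sets, I would argue that an eigenvalue of $H_p$ outside $\Sigma^D$ must come from a branch of the dispersion relation of $H$ that is nonconstant on $B$ but becomes constant after restriction to $B_p$; by \cite[Proposition 10]{Do_Kuch} this forces $B_p$ (or one of its line components) to lie inside one of $A_1,\dots,A_4$. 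Matching each $A_i$ against the defining equation (\ref{E:Bp}) then shows that this occurs precisely for $p=(2N,0)$ or $p=(0,N)$, which separates case (4.i) (no extra spectrum, eigenspaces spanned by simple loop and tube loop states) from case (4.ii), where the extra eigenvalues $\Sigma^0$ and their generating eigenfunctions are exactly those catalogued in Lemma \ref{L:extra_pps}.

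The main obstacle is the completeness half of Part (4): showing that the explicitly constructed rhombus bracelet, hexagon bracelet, mushroom, flower, and double-band functions actually span the full eigenspaces, rather than merely lying in them. This is handled by the ``lowest point'' argument developed before Lemma \ref{L:extra_pps}, which locates the bottom vertex $P$ of the support at one of finitely many symmetry types $D,E,I$, eliminates the impossible placements by the Neumann conditions (the telescoping sums that force $n=0$, and the height-$|e_1|$ obstruction in the double-band case), and then peels off one bracelet, flower, or double-band summand at a time to reduce the support. The delicate points are verifying that each elimination is genuinely forced by the vertex conditions and confirming that no nonconstant $F_j$ can become flat on $B_p$ except along $A_1,\dots,A_4$; once these are secured, Parts (1)--(4) combine to give the stated description of $\sigma(H_p)$.
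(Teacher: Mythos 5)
Your proposal matches the paper's proof in all essentials: part (2) by restricting the dispersion relation through the decomposition $H_p=\int_{B_p}^\oplus H^\theta\,d\theta$ and (\ref{E:restricted_spectrum}), part (3) by assembling the ranges from Lemmas \ref{L:F1}--\ref{L:F3} and applying $D^{-1}$, and part (4) via the linear level sets of \cite[Proposition 10]{Do_Kuch} together with the ``lowest point'' elimination argument and Lemma \ref{L:extra_pps}. The only cosmetic difference is in part (1), where the paper simply cites the known result \cite{Thomas,Berk_Kuch,Reed_Simon_4} while you sketch the underlying analytic-fiber argument, which is exactly the content of those references.
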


\begin{proof}
The first claim is a well-known fact about singular continuous spectrum of Schr\"{o}dinger operator (see \cite{Thomas,Berk_Kuch, Reed_Simon_4}).

The second claim follows from the formula (\ref{E:restricted_spectrum}) and claim 2i of \cite[Theorem 7]{Do_Kuch}.

The fact that absolutely continuous spectrum $\sigma_{ac}(H_p)$ has band gap structure and all bands do not overlap is also a consequence of formula (\ref{E:restricted_spectrum}) and claims (3) and (4) of \cite[Theorem 7]{Do_Kuch}. 

Gap will be opened every time when the range of functions $F_1, F_2$ and $F_3$ creates a gap in the interval $[-1,1]$. 
The rest of this claim therefore follows directly from lemmas \ref{L:F1}, \ref{L:F2}, \ref{L:F3} (result of which was briefly described in Fig. \ref{F:range}). 

As it was noticed before, the pure point spectrum of $H_p$ always contains the Dirichlet spectrum $\Sigma^D$. 
The claim about infinite multiplicity of eigenvalues in both cases is known to be true for periodic problems. 
Extra pure point spectrum occurs only if there is some linear level set. 
This can happen when $p=(2N,0)$ or $(0,N)$ for some nonzero $N$. 
The next part of claim (4) is proved similarly as for the analogous one from \cite[Lemma 6]{Do_Kuch}. 
The only difference is the eliminating process may end up with a tube loop eigenfunction.  

The last part of the claim is the same as lemma \ref{L:extra_pps}.
\end{proof}

\section{Proof of supporting lemmas}\label{S:supporting}
In what follows, we denote the graph of the line $q_2\theta_2=q_1\theta_1-2k\pi$ as $t_{q,k}$ and $T_{q,k}$ - part of the line $t_{q,k}$ restricted to $V_q$.
Let us recall that by $\theta_0$ we denote $\arccos(-1/3)$.
\subsection{Proof of Lemma \ref{L:F1}}

First of all we notice the following:
\begin{enumerate}[i.]
  \item For all $\theta_1=\theta_1^0\neq\pm\pi$ fixed, function $F_1(\theta_1^0,\theta_2)$ is decreasing on $[0,\pi]$. 
	\item For all $\theta_2=\theta_2^0$ fixed, function $F_1(\theta_1,\theta_2^0)$ is non-decreasing on $[0,\pi]$.
\end{enumerate}

\begin{figure}[ht!]
\includegraphics [width=84mm]{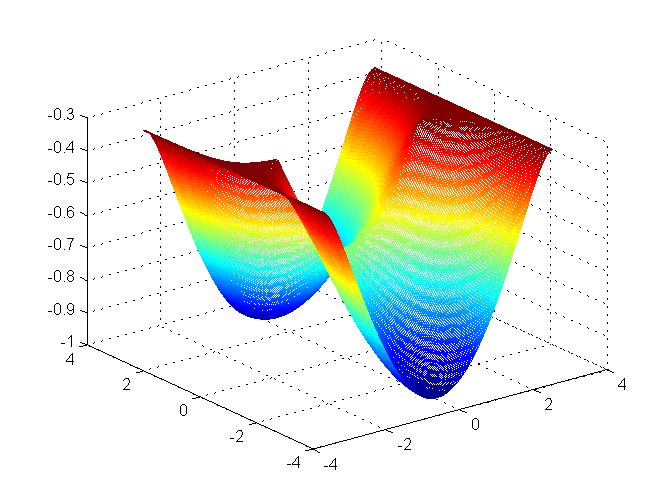}
\caption{Graph of function $F_1$}
\end{figure}
Indeed, $F_1(\theta_1,\theta_2)$ is the smallest value of $x-$ coordinate of intersections of graphs of functions $f(x)=9x^3-x$ and $g(x)=(\cos{\theta_1}+1)(3x+\cos\theta_2)$.
If we fix $\theta_1:=\theta_1^0\neq\pm\pi$, the slope $3(\cos\theta_1^0+1)$ is const and positive.
Then the $y-$intercept $(\cos\theta_1^0+1)\cos\theta_2$ is decreasing on $[0,\pi]$, which makes function $F_1$ decrease on $[0,\pi]$. 
(In case $\theta_1=\pm\pi$, function $F_1$ is constant and equal to $-1/3$.)
\begin{figure}[ht!]
\includegraphics [width=84mm]{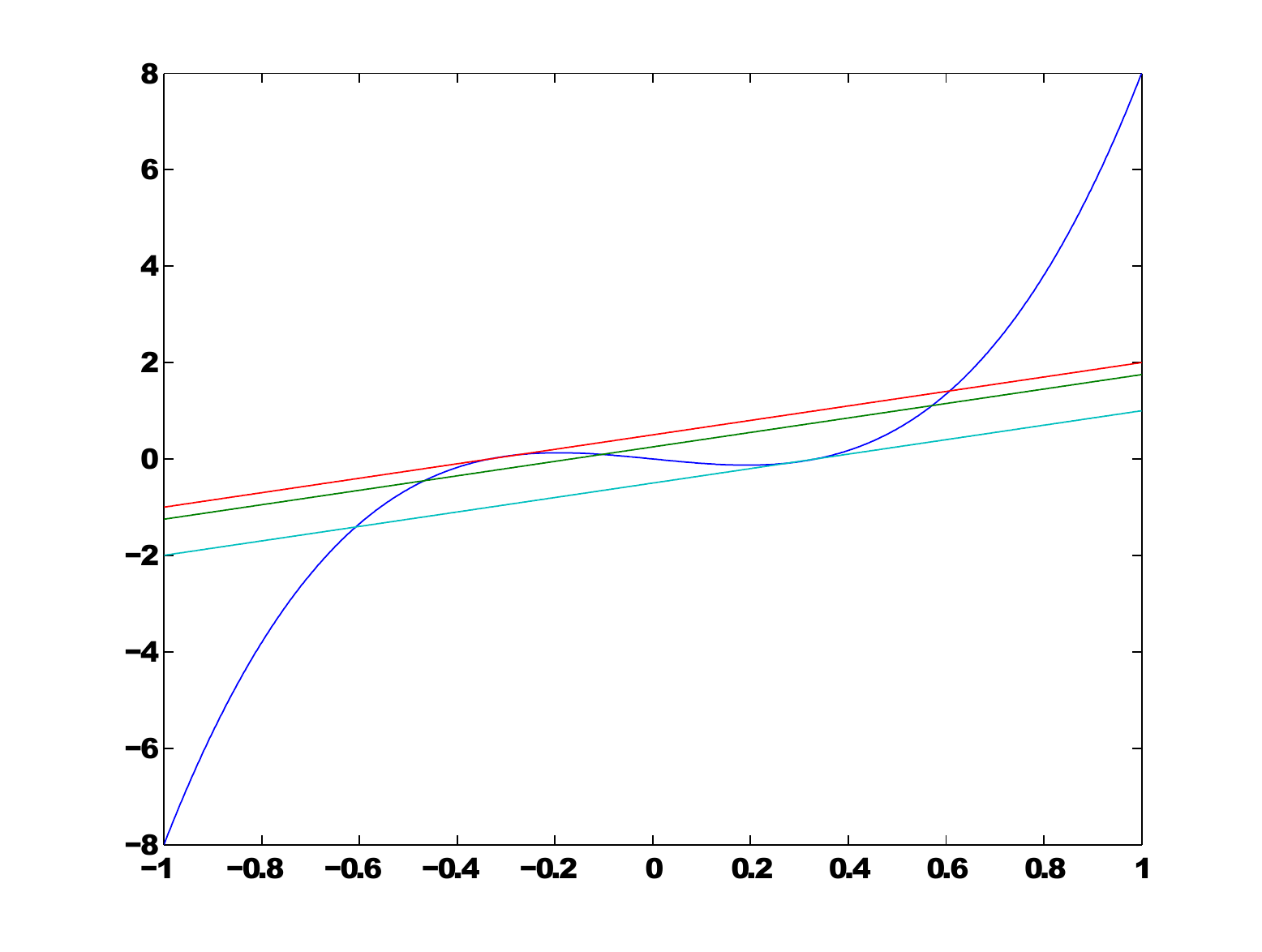}
\caption{The slope is const and $\theta_2$-intercept is increasing make $F_1$ increase}
\end{figure}

Now if we fix $\theta_2:=\theta_2^0$, the $x-$intercept $-(\cos\theta_2^0)/3$ of $g(x)$ will be const and belongs to the interval $[-1/3,1/3]$.
The slope $3(\cos\theta_1+1)$ of function $g(x)$ is decreasing on $[0,\pi]$, thus $F_1(\theta_1,\theta_2^0)$ is nondecreasing on $[0,\pi]$.

If $q_2=0$ then $\displaystyle V_q=\{(\theta_1, \theta_2)\in B: \theta_1=\frac{2k\pi}{q_1}, k=0, 1, 2, \ldots \}$. 

The most right interval $T_{q,k}$ of $V_q$ is $T_{q,[\frac{q_1}{2}]}$. 
Thus, 
$$F_1(V_q)=F_1\big(\bigcup_{0\leq k\leq \big[\frac{q_1}{2}\big]} T_{q,k}\big)=\bigcup_{0\leq k\leq \left[\frac{q_1}{2}\right]} F_1(T_{q,k}).$$

Since $\displaystyle T_{q,k}=\left\{\frac{2k\pi}{q_1}\right\}\times [-\pi,\pi]$ and $F_1(\theta_1, \theta_2)=F_1(\theta_1, -\theta_2)$,
$$F_1(T_{q,k})=F_1\left(\left\{\frac{2k\pi}{q_1}\right\}\times [-\pi,\pi]\right)=F_1\left(\left\{\frac{2k\pi}{q_1}\right\}\times [0,\pi]\right).$$
Now for each $k$, $\theta_1=2k\pi/q_1$ is fixed, according to the remark above we have
$$F_1(T_{q,k})=\left[F_1\left(\frac{2k\pi}{q_1},\pi\right), F_1\left(\frac{2k\pi}{q_1},0 \right)\right].$$
(In case $\theta_1=\pi$ the segment boils down to the one-point set $\{-1/3\}$.)
Therefore,
$$F_1(V_q)=\bigcup_{0\leq k\leq \left[\frac{q_1}{2}\right]} \left[F_1\left(\frac{2k\pi}{q_1},\pi\right), F_1\left(\frac{2k\pi}{q_1},0 \right)\right].$$

Recall that $l_0=[q_1\theta_0/2\pi]$ then the interval $T_{q,k}$ intersects with $\theta_2=\pi$ at $\theta_1=2k\pi/q_1<\theta_0$ for all $k\in[0,l_0]$.
Function $F_1$ is non-decreasing on $[0,\pi]$ for fixed $\theta_2$, thus for all $k\in[0,l_0]$
$$-1=F_1(0,\pi)<F_1(2k\pi/q_1,\pi)\leq -2/3=F_1(\theta_0,\pi)$$
and
$$F_1(0,0)=-2/3\leq F_1(2k\pi/q_1,0) \leq F_1(2l_0\pi/q_1,0).$$
\begin{figure}[ht!]
\includegraphics[width=84mm]{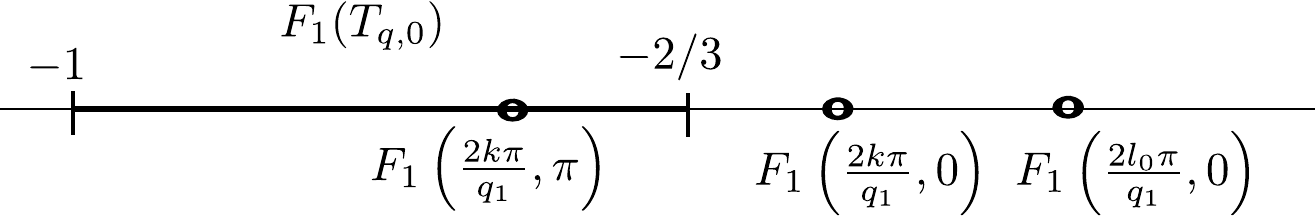}
\caption{Case $q_2=0. F_1(T_{q,0})=[-1,-2/3]$, $F_1(2k\pi/q_1,\pi)$ are on the right of $-2/3$ and $F_1(2k\pi/q_1,0)\leq F_1(2l_0\pi/q_1,0) $ are on the left of $-2/3$ for all $k\in[0,l_0]$}
\label{F:lemma1_vertical_1}
\end{figure}

As a consequence (see Fig. \ref{F:lemma1_vertical_1}),
\begin{equation}
\cup_{0\leq k\leq l_0} F_1(T_{q,k})=[-1,F_1(2l_0\pi/q_1,0)].
\label{E:before_k0}
\end{equation}

For all $q_1\geq 6$ the inequality $1\leq q_1/2-q_1\theta_0/2\pi$ is true, which means there must be an integer $k$ such that $l_0<k\leq [q_1/2]$.
It is not difficult to check that the latter claim is also true for $1< q_1\leq 5$.
Thus, for all $q_1>1$, there exists $k$ such that $T_{q,k}$ belongs to $V_q$ and lies on the right of the line $\theta_1=\theta_0$ (see Fig. \ref{F:lemma1_10}).
\begin{figure}[ht!]
\includegraphics [width=39mm]{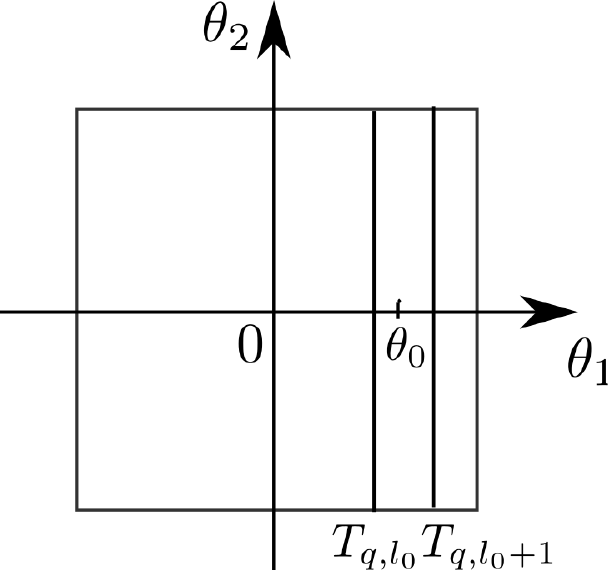}
\caption{Case $q_2=0, q_1>1$. Point $(\theta_0,0)$ lies between two lines $t_{q,l_0}$ and $t_{q,l_0+1}$}
\label{F:lemma1_10}
\end{figure}

For all such $k$ we have $2k\pi/q_1>\theta_0$. 
Since function $F_1$ is non-decreasing on $[0,\pi]$ for fixed $\theta_2$, we have
$$F_1(\theta_0,0)=-1/3\leq F_1(2k\pi/q_1,0)\leq -1/3 \text{ i.e. } F_1(2k\pi/q_1,0)=-1/3$$
and 
$$F_1(2(l_0+1)\pi/q_1,\pi)\leq F_1(2k\pi/q_1,\pi) \text{ for all }  k\in[l_0+1,[q_1/2]].$$ 
Thus, 
\begin{equation}
\bigcup_{l_0< k\leq \left[\frac{q_1}{2}\right]} F_1(T_{q,k})
=\left[F_1\left(\frac{2(l_0+1)\pi}{q_1},\pi\right),-\frac{1}{3}\right].
\label{E:after_k0}
\end{equation}

Combining (\ref{E:before_k0}) and (\ref{E:after_k0}) we obtain that
$$F_1(V_q)=[-1,F_1(2l_0\pi/q_1,0)]\cup [F_1(2(l_0+1)\pi/q_1,\pi),-1/3].$$
If $q_1=1$ then $V_q=T_{q,0}$, which leads to $F_1(V_q)=F_1(T_{q,0})=[-1,-2/3]$.

Now let us study the case when $q_2=2k_0$ for some nonzero integer $k_0$ (see Fig. \ref{F:Fig10}).
Note that $(0,-\pi)\in T_{q,k_0}$, thus $-1=F_1(0,-\pi)\in F_1(T_{q,k_0})$. 

If $T_{q,k_0}$ intersects with $\theta_1=\pi$ at some point $(\pi,\theta_2^0)$, then $F_1(\pi,\theta_2^0)=-1/3$.
As a consequence, $F_1(V_q)\supset F_1(T_{q,k_0})=[-1,-1/3]$ or $F_1(V_q)=[-1,-1/3]$.
\begin{figure}[ht!]
\includegraphics[width=84mm]{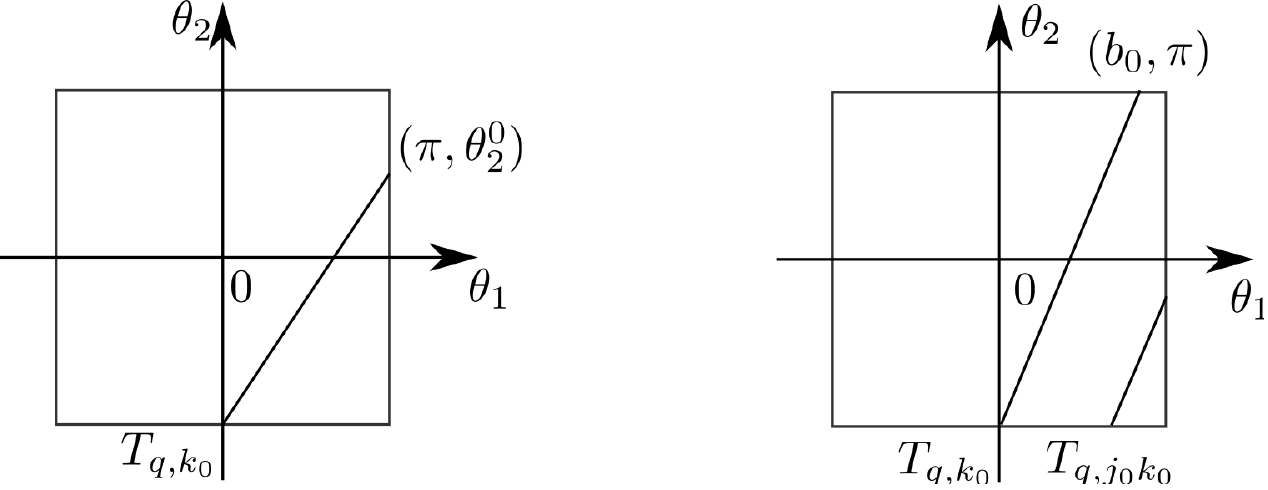}\caption{Different situations occur for even $q_2=2k_0$. 
$T_{q,k_0}$ intersects with $\theta_1=\pi$ on the left and with $\theta_2=\pi$ on the right}\label{F:Fig10}
\end{figure} 
If $T_{q,k_0}$ intersects with $\theta_2=\pi$ at $(b_0,\pi)$,
then $t_{q,jk_0}$ intersects with $\theta_2=-\pi$ and $\theta_2=\pi$ at $(a_j,-\pi)$ and $(b_j,\pi)$ correspondingly with $a_j=(j-1)b_0/2$ and $b_j=(j+1)b_0/2$.
There exists a smallest $j_0\in \mathbb{N}, j_0\geq 2$ such that $T_{q,j_0 k_0}$ intersects with both $\theta_2=-\pi$ at $(a_{j_0},-\pi)$ and $\theta_1=\pi$ at $(\pi,\theta_2^0)$.
Note that $0<a_j<a_{j+1}=b_{j-1}$ for all $j\geq 1$, thus according to the second remark
$$F_1(a_j,-\pi)<F_1(a_{j+1},-\pi)=F_1(a_{j+1},\pi)=F_1(b_{j-1},\pi).$$
\\
Then we have 

$\displaystyle [-1,-1/3]\subset$
\\
$\displaystyle \subset [-1,F_1(b_0,\pi)]\cup(\bigcup_{1\leq j<j_0}[F_1(b_{j-1},\pi),F_1(b_j,\pi)])\cup [F_1(b_{j_0-1},\pi),-1/3]$
\\
$\displaystyle \subset [-1,F_1(b_0,\pi)]\cup(\bigcup_{1\leq j<j_0}[F_1(a_j,-\pi),F_1(b_j,\pi)])\cup [F_1(a_{j_0},-\pi),-1/3]$
\\
$\displaystyle \subset \bigcup_{0\leq j\leq j_0} F_1(T_{q,jk_0})\subset F_1(V_q)$
\\
i.e. $F_1(V_q)=[-1,-1/3]$. 

We will now study the last case when $q_2$ is odd, namely $q_2=2k_0+1, k_0\in\mathbb{N}$.
Since $q_2$ is odd, $V_q$ contains neither $(0,\pi)$ nor $(0,-\pi)$, the minimum of $F_1(V_q)$ is some $a\in(-1,-2/3]$ (since $F_1(0,0)=-2/3$).
One might expect to be able to find $a$ on $T_{q,k_0}$ or $T_{q,k_0+1}$ which are closest to $(0,-\pi)$.

If $q_1=0$, $V_q=\{[-\pi,\pi]\times\{-2k\pi/q_2\}, k=0,\ldots,[q_2/2]\}$. 
Since $F_1(\theta_1,\theta_2)=F_1(-\theta_1,-\theta_2)$, $F_1(V_q)=\cup_{0\leq k\leq [q_2/2]}F_1([0,\pi]\times \{2k\pi/q_2\})$. 
Moreover, function $F_1$ is nondecreasing on $[0,\pi]$ for fixed $\theta_2$ and so 
$$F_1(V_q)=\cup_{0\leq k\leq [q_2/2]}[F_1(0,2k\pi/q_2),F_1(\pi,2k\pi/q_2)].$$
For $\theta_1=\pi$, $F_1\equiv -1/3$. 
Besides, function $F_1(0,\theta_2)$ is decreasing on $[0,\pi]$, we have 
$$F_1(V_q)=\bigcup_{0\leq k\leq [q_2/2]}\big[F_1\big(0,\frac{2k\pi}{q_2}\big),-1/3\big]= \big[F_1\big(0,\frac{2[q_2/2]\pi}{q_2}\big),-1/3\big].$$
Now we consider the case when $q_1\neq 0$, let $k_1\in\mathbb{N}$ such that function $F_1$ attain its minimum $a$ on $T_{q,k_1}$. 
If $T_{q,k_1}$ intersects with $\theta_1=\pi$ then we have $[a,-1/3]\subset F_1(T_{q,k_1})\subset F_1(V_q)$ i.e. $F_1(V_q)=[a,-1/3]$.
Otherwise let $k_2\in\mathbb{N}$ such that $T_{q,k_2}$ is the most left segment from $V_q$ having nonempty intersection with $\theta_1=\pi$ (see Fig. \ref{F:p2odd}).
\begin{figure}[ht!]
\includegraphics[width=39 mm]{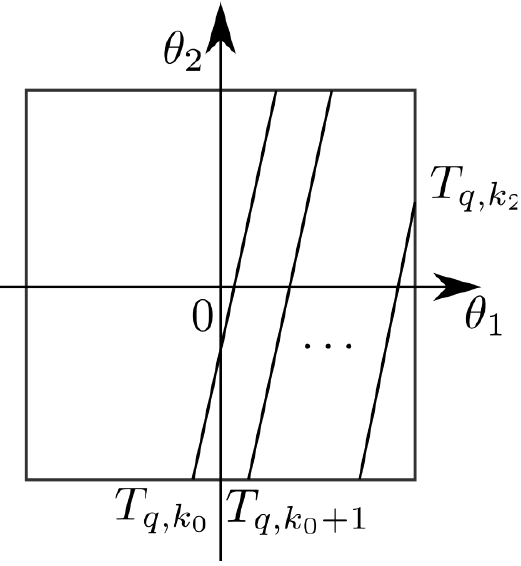}
\caption{$q_2$ is odd and $q_1$ is nonzero.
$T_{q,k_2}$ is the most left segment from $V_q$ having nonempty intersection with $\theta_1=\pi$}\label{F:p2odd}
\end{figure}
Let 
$(a_k,-\pi)$ and $(b_k,\pi)$ be intersection of $t_{q,k}$ with $\theta_2=-\pi$ and $\theta_2=\pi$ correspondingly. 
Then 
$$\displaystyle a_{k_0+1}=\frac{\pi}{q_1}\leq \frac{(2k_0+1)\pi}{q_1}=b_0\leq b_{k_1},$$
and 
$$a_k=\frac{(2k-q_2)\pi}{q_1}, b_k=\frac{(q_2+2k)\pi}{q_1}$$
which implies that $0< a_{k}\leq b_{k-1}<b_{k}$ for $k\geq k_0+1$.

Since function $F_1$ is non-decreasing on $[0,\pi]\times \{\pi\}$, we have
$$F_1(a_{k_0+1},-\pi)=F_1(a_{k_0+1},\pi)\leq F_1(b_{k_1},\pi)$$
and
$$F_1(a_{k},-\pi)= F_1(a_{k},\pi)\leq F_1(b_{k-1},\pi)\leq F_1(b_{k}, \pi)$$
which makes $[F_1(b_{k-1},\pi),F_1(b_{k}, \pi)]\subset [F_1(a_{k},-\pi),F_1(b_{k}, \pi)]\subset F_1(T_{q,k})$ for $k\geq k_0+1$.
\\
Thus 
\\$\displaystyle [a, -1/3]=[a,F_1(a_{k_0+1},-\pi)]\cup[F_1(a_{k_0+1},-\pi),F_1(b_{k_0+1},\pi)]\cup$
\\
$\displaystyle \cup(\cup_{k_0+1< k < k_2} [F_1(b_{k-1},\pi), F_1(b_{k}, \pi)]) \cup [F_1(b_{k_2-1},\pi),-1/3]\subset$
\\ 
$\displaystyle \subset [a,F_1(b_{k_1},\pi)]\cup F_1(T_{q,k_0+1})\cup (\cup_{k_0+1<k<k_2} F_1(T_{q,k}))\cup [F_1(a_{k_2}),-1/3]\subset$
\\
$\displaystyle \subset F_1(T_{q,k_1})\cup (\cup_{k_0+1\leq k \leq k_2} F_1(T_{q,k}))\subset F_1(V_q)$
\\ i.e. $F_1(V_q)=[a,-1/3]$.

Minimum value of $F_1(V_q)$ does not exceed $-2/3=F_1(0,0)$ in all cases. 

\subsection{Proof of Lemma \ref{L:F2}}
\begin{figure}[ht!]
\includegraphics[width=84mm]{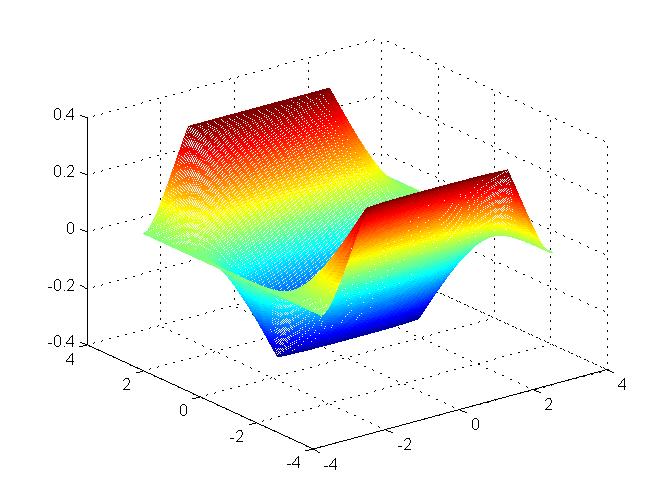}
\caption{Graph of function $F_2$}
\end{figure}
If $q_2\neq 0$ and $\displaystyle \frac{q_1}{q_2}\geq \frac{\pi}{\theta_0}$ or $q_2=0$, 
then $T_{q,0}$ intersects with $\theta_2=\pi$ at $(\theta_1^0, \pi)$ for some $\theta_1^0\in [0,\theta_0]$.
Thus $\displaystyle F_2(T_{q,0})=[-1/3,1/3]$, i.e. $F_2(V_q)=[-1/3,1/3]$.

If $q_2\neq 0$ and $1< q_1/q_2< \pi/\theta_0$, then $T_{q,0}$ and $\theta_2=\pi$ intersect at $(\theta_1^0,\pi)$ for some $\theta_1^0\in(\theta_0,\pi)$. 
Thus, $[-1/3,0]\subset F_2(T_{q,0})$.
Also since $q_1\geq 2, q_2\geq 1$, which is obtained from inequality $1<q_1/q_2$, we have
\begin{equation}
k_0:=\left[\frac{q_1\theta_0+q_2\pi}{2\pi}\right]\geq \left[\frac{2\theta_0+\pi}{2\pi}\right]\geq 1.
\label{E:lemma3_1}
\end{equation}
Two lines $t_{q,k_0}$ and $\theta_2=-\pi$ intersect at $\displaystyle (\theta_1^0, -\pi), \theta_1^0=\frac{-q_2\pi+2k_0\pi}{q_1}$.

From (\ref{E:lemma3_1}) one have 
$$k_0\leq \frac{q_1\theta_0+q_2\pi}{2\pi}< k_0+1 $$
therefore 
$$-\theta_0<\theta_0-\frac{2\pi}{2}\leq \theta_0-\frac{2\pi}{q_1}=\frac{-q_2\pi+q_1\theta_0+q_2\pi-2\pi}{q_1}<\frac{-q_2\pi+2k_0\pi}{q_1},$$
and
$$\frac{-q_2\pi+2k_0\pi}{q_1}\leq \frac{-q_2\pi+q_1\theta_0+q_2\pi}{q_1}=\theta_0,$$
i.e. $\theta_1^0\in [-\theta_0,\theta_0]$ and so $(\theta_1^0,-\pi)\in V_q$. 
\begin{figure}[ht!]
\includegraphics[width=84mm]{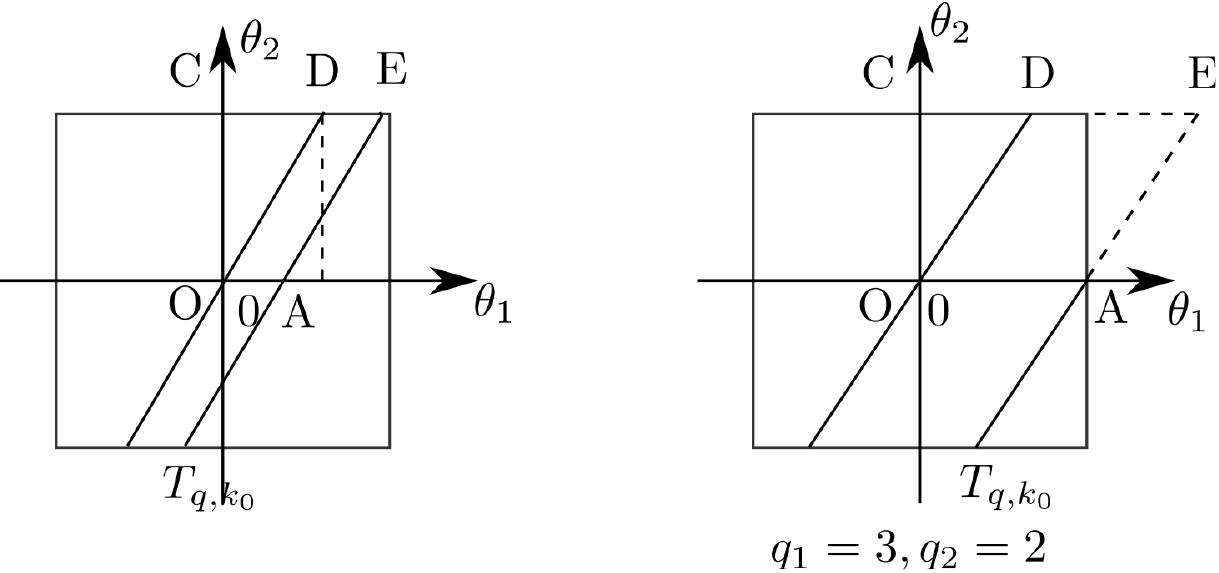}
\caption{$q_2\neq 0$ and $1<q_1/q_2<\pi/\theta_0$. On the left $0A\leq \theta_0$ while on the right is one case when $OA>\theta_0$. 
Here $k_0=[(q_1\theta_0+q_2\pi)/2\pi]$}\label{F:Fig6}
\end{figure}
Let $C=(0,\pi), O=(0,0)$, $A, E$ are intersection points of $t_{q,k_0}$ with two lines $\theta_2=0$ and $\theta_2=\pi$ correspondingly as shown in the Fig. \ref{F:Fig6}.

If $OA\leq\theta_0$, then $A\in V_q$ and $F_2(A)=-1/3$. Thus $[-1/3,1/3]\subset F_2(T_{q,k_0})$ i.e. $F_2(V_q)=[-1/3,1/3]$.
\begin{figure}[ht!]
\includegraphics[width=39mm]{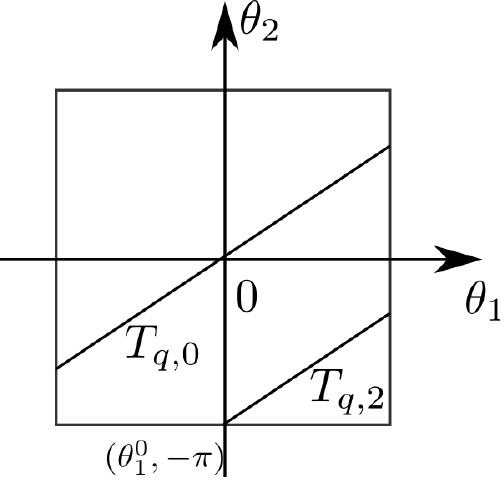}\caption{$T_{q,0}$ and $T_{q,k_0}$ in case $q_1/q_2\leq 1$ and $q_1>1$ (here $p=(2,3), k_0=2$)} \label{F:Fig7} 
\end{figure}
In case $OA>\theta_0$, we have $CE=CD+DE=CD+OA>\theta_0+\theta_0>\pi$ which means the point $E$ lies outside Brillouin zone $B$.
In this case, $T_{q,k_0}$ intersects with $\theta_1=\pi$ at $(\pi,\theta_2^0)$ for some $\theta_2^0\in [-\pi,0]$, 
thus $[0,1/3]=[F_2(\pi,\theta_2^0),F_2(\theta_1^0,-\pi)]\subset F_2(T_{q,k_0})$.
We also have that $F_2(V_q)=[-1/3,1/3]$ because
$$[-1/3,1/3]=[-1/3,0]\cup [0,1/3] \subset F_2(T_{q,0})\cup F_2(T_{q,k_0})\subset F_2(V_q).$$
Now we consider the case when $q_2\neq 0,  q_1/q_2\leq 1$ and $q_1>1$. Since
$$\frac{q_2\pi+q_1\theta_0}{2\pi}-\frac{q_2\pi-q_1\theta_0}{2\pi}=\frac{q_1\theta_0}{\pi}\geq\frac{2\theta_0}{\pi}>1,$$
we can always choose an integer $k_0$ such that
$$0<\frac{q_2\pi-q_1\theta_0}{2\pi}\leq k_0\leq \frac{q_2\pi+q_1\theta_0}{2\pi}.$$

For chosen $k_0$, two lines $t_{q,k_0}$ and $\theta_2=-\pi$ intersect at the point $(\theta_1^0,-\pi)$ for some $\theta_1^0\in [-\theta_0,\theta_0]$.

$F_2(V_q)=[-1/3,1/3]$ because
$$[-1/3,1/3]=[-1/3,0]\cup [0,1/3] \subset F_2(T_{q,0})\cup F_2(T_{q,k_0})\subset F_2(V_q).$$

The only case left to be considered is when $q_1\leq 1$ and $q_2\neq 0$ .

If $q_1=0$ and $q_2$ is even, the interval $T_{q,[q_2/2]}$ concides with $[-\pi,\pi]\times\{\pi\}$, so 
$$[-1/3,1/3]=[-1/3,0]\cup [0,1/3] \subset F_2(T_{q,0})\cup F_2(T_{q,[q_2/2]})\subset F_2(V_q).$$

If $q_1=0$ and $q_2$ is odd, namely $q_2=2k_0+1, k_0\in\mathbb{N}$, $V_q$ does not intersect with $\theta_2=\pm\pi$, thus $a:=\max F_3(V_q)\in[0,1/3)$.

Let $k\in\mathbb{N}$ such that $a\in F_2(T_{q,k})$.  
One would expect that $k=k_0$ which makes $t_{q,k}$ be closest to $\theta_2=-\pi$. 
We have 
$[0,a]\subset F_2(T_{q,k})$,
therefore 
$$[-1/3,0]\cup[0,a]\subset F_2(T_{q,0})\cup F_2(T_{q,k})\subset F_2(V_q) \text{  i.e.  } F_2(V_q)=[-1/3,a].$$

\begin{figure}[ht!]
\includegraphics[width=84mm]{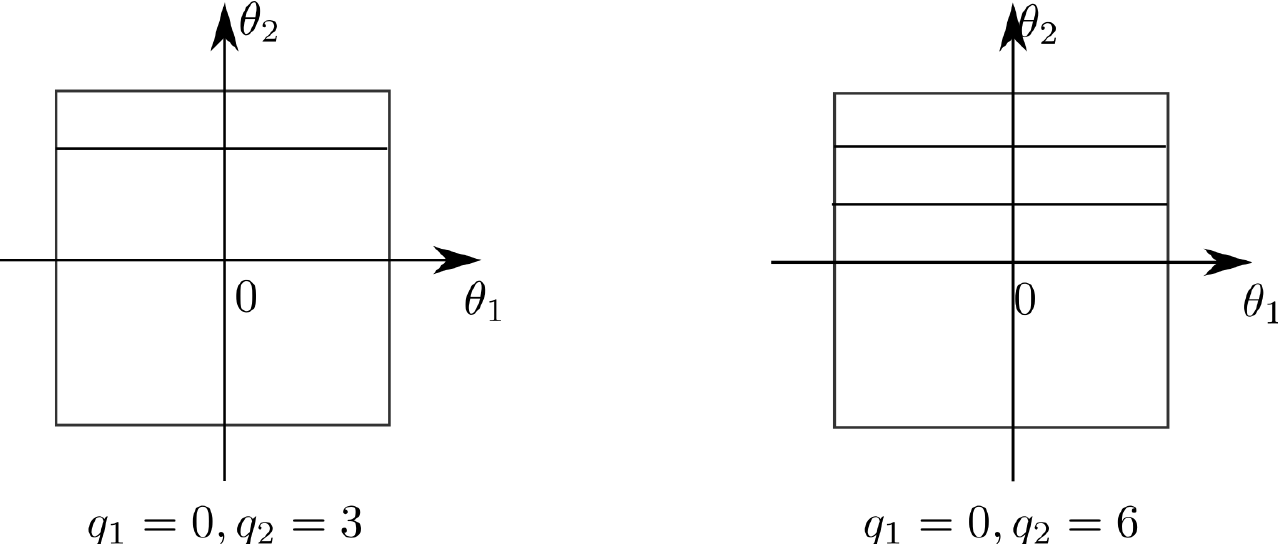}\caption{Different cases occur when $q_1=0$. On the left $q_2$ is odd and on the right $q_2$ is even}\label{F:Fig8}
\end{figure}

If $q_1=1$ and $q_2\neq 0$ even, i.e. $q_2=2k_0$ for some $k_0\in\mathbb{N}, k_0>0$,
we have $F_2(T_{q,k_0})=[0,1/3]$.
As a consequence,
$$[-1/3,1/3]=[-1/3,0]\cup[0,1/3]\subset F_2(T_{q,0})\cup F_2(T_{q,k_0})\subset F_2(V_q),$$
which means that $F_2(V_q)=[-1/3,1/3].$

If $q_1=1$ and $q_2$ is odd, namely, $q_2=2k_0+1, k_0\in\mathbb{N}$, $V_q$ again does not contains any point from $[-\theta_0,\theta_0]\times \{\pm\pi\}$, so 
$a:=\max F_2(V_q)\in[0,1/3).$
Let $k\in\mathbb{N}$ such that $a\in F_2(T_{q,k})$, since $(2k_0+1)(\pm\pi)\neq \theta_1 -2k\pi$ for $\theta_1\neq \pm\pi$, $T_{q,k}$ intersects with $\theta_1=\pm\pi$, and so
$[0,a]\subset F_2(V_q)$.

As a consequence, $$[-1/3,a]=[-1/3,0]\cup[0,a]\subset F_2(T_{q,0})\cup F_2(T_{q,k})\subset F_2(V_q).$$
Thus $F_2(V_q)=[-1/3,a]$ for $a\in[0,1/3)$. 

\subsection{Proof of Lemma \ref{L:F3}}

In this proof we will need the following remarks:

i. For fixed $\theta_2^0$ function $F_3(\theta_1,\theta_2^0)$ is non-increasing on $[0,\pi]$. 

ii. For fixed $\theta_1^0$ function $F_3(\theta_1^0,\theta_2)$ is decreasing on $[0,\pi]$.

The above remarks can be obtained using the same argument as in lemma \ref{L:F1}. 
\begin{figure}[ht!]
\includegraphics[width=84mm]{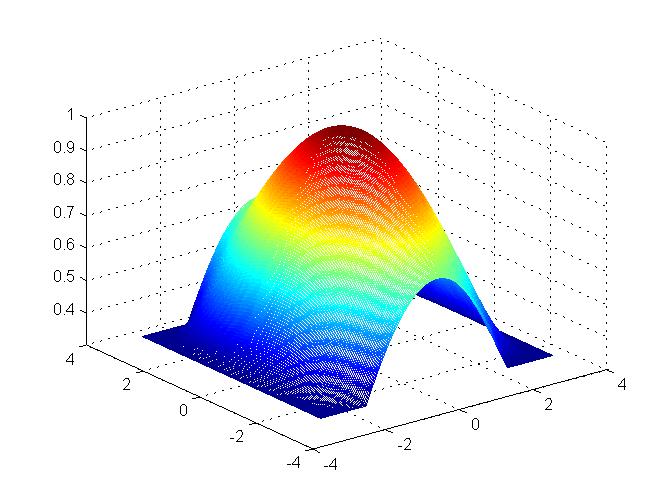}
\caption{Graph of function $F_3$}
\end{figure}

Notice that function $F_3$ always attains its maximum at $(0,0)$.

If $q_2=0$ then
$$V_q=\{\{2k\pi/q_1\}\times[-\pi,\pi], 0\leq k\leq [q_1/2]\}.$$
Since $F_3(\theta_1,\theta_2)=F_3(\theta_1,-\theta_2)$ we have 
$$F_3(V_q)=F_3(\{\{2k\pi/q_1\}\times[0,\pi], 0\leq k\leq [q_1/2]\}).$$
Thus according to the last remark
$$F_3(V_q)=\cup_{0\leq k\leq [q_1/2]} [F_3(2k\pi/q_1,\pi),F_3(2k\pi/q_1,0)].$$
For all $k$ such that $0\leq k\leq l_0=[q_1\theta_0/2\pi]$ we have
\begin{enumerate}[i.]
  \item $F_3(T_{q,0})=[2/3,1]$,
	\item $F_3(2l_0\pi/q_1,\pi)\leq F_3(2k\pi/q_1,\pi)<2/3=F_3(0,\pi)$ according to the first remark,
	\item $2/3=F_3(\theta_0,0)\leq F_3(2k\pi/q_1,0)$.
\end{enumerate}
Therefore 
\begin{equation}
\cup_{0\leq k\leq l_0} F_3(T_{q,k})=[F_3(2l_0\pi/q_1,\pi),1].
\label{E:before_l0}
\end{equation}

From lemma \ref{L:F1} we know that when $q_1>1$ there must be some $k\in(l_0,[q_1/2]]$. For these $k$ we have
$\theta_0<2k\pi/q_1\leq \pi$ which makes $1/3=F_3(\pi,\pi)\leq F_3(2k\pi/q_1,\pi)\leq F_3(\theta_0,\pi)=1/3$ or $F_3(2k\pi/q_1,\pi)=1/3$. Besides
$F_3(2k\pi/q_1,0)\leq F_3(2(l_0+1)\pi/q_1,0)$ from the first remark.
Thus, 
\begin{equation}
\bigcup_{l_0< k\leq \left[\frac{q_1}{2}\right]} \left[\frac{1}{3}, F_3\left(\frac{2k\pi}{q_1},0\right)\right]
=\left[\frac{1}{3},F_3\left(\frac{2(l_0+1)\pi}{q_1},0\right)\right].
\label{E:after_l0}
\end{equation}
Combining equations (\ref{E:before_l0}) and (\ref{E:after_l0}) we obtain that 
$$F_3(V_q)=[1/3,F_3(2(l_0+1)\pi/q_1,0)]\cup [F_3(2l_0\pi/q_1,\pi),1].$$
In case $q_1=1$, $V_q=T_{q,0}$ and so $F_3(V_q)=[2/3,1]$. 

Now we consider the case when $q_2\neq 0$.

If the slope $q_1/q_2$ of $t_{q,k}$ is less or equal than $\pi/\theta_0$,
then the interval $T_{q,0}$ intersects either with the line $\theta_1=\pi$ at $(\pi,\theta_2^0)$ for some $\theta_2\in[0,\pi]$ or with the line $\theta_2=\pi$ at $(\theta_1^0,\pi)$ for some $\theta_1^0\geq \theta_0$
(see Fig. \ref{F:Fig1}).
Since $F_3(\theta_1^0,\pi)=F_3(\pi,\theta_2^0)=1/3$, the range of $F_3$ restricted to $T_{q,0}$ is $[1/3,1]$, and thus $F_3(V_p)=[1/3,1]$.
\begin{figure}[ht!]
\includegraphics[width=84mm]{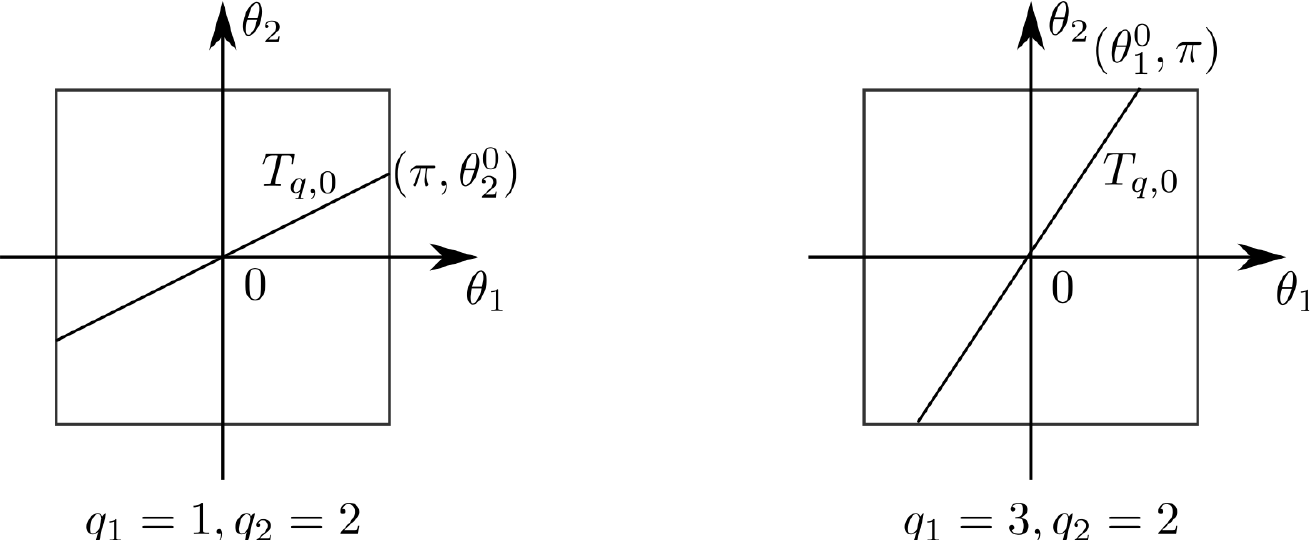}\caption{$q_2$ is nonzero and $q_1/q_2\leq\pi/\theta_0$.
On the left $T_{q,0}$ intersects with $\theta_1=\pi$ and on the right it intersects with $\theta_2=\pi$ at $(\theta_1^0,\pi)$ where $\theta_1^0>\theta_0$} \label{F:Fig1}
\end{figure}

If $q_1/q_2$ is larger than $\pi/\theta_0$, the interval $T_{q,0}$ intersects with the line $\theta_2=\pi$ at $(\theta_1^0,\pi)$
where $0 < \theta_1^0 < \theta_0$. Thus $F_3(T_{q,0})$, and as a consequence, $F_3(V_q)$ contains $[a,1]$ for $a=F_3(\theta_1^0,\pi)\in(F_3(\theta_0,\pi),F_3(0,\pi))=(1/3,2/3)$ according to the first remark.

In case $q_1\geq 4$ and $q_2>1$, let $k_0=\left[\frac{q_1\theta_0}{2\pi}\right]$, then
$$k_0\leq \frac{q_1\theta_0}{2\pi} <k_0+1.$$

Two lines $t_{q,k_0}$ and $\theta_2=0$ intersect at $(2k_0\pi/q_1,0)$.
Since $$0\leq\frac{2k_0\pi}{q_1}\leq\theta_0,$$
according to the first remark, we have
$$ b:= F_3\left(\frac{2k_0\pi}{q_1},0\right) > \frac{2}{3}.$$
Two lines $t_{q,k_0}$ and $\theta_2=\pi$ intersect at the point $((q_2+2k_0)\pi/q_1,\pi)$.
Since
$$\theta_0<\frac{2(1+k_0)\pi}{q_1}=\frac{(2k_0+2)\pi}{q_1}\leq\frac{(q_2+2k_0)\pi}{q_1},$$
the interval $T_{q,k_0}$ intersects either with the line $\theta_2=\pi$ at $(\theta_1^0,\pi)$ for some $\theta_0<\theta_1^0<\pi$
or with the line $\theta_1=\pi$ at $(\pi,\theta_2^0)$ for some $\theta_2^0\in [0,\pi]$ (see Fig. \ref{F:Fig2}).
\begin{figure}[ht!]
\includegraphics[width=84mm]{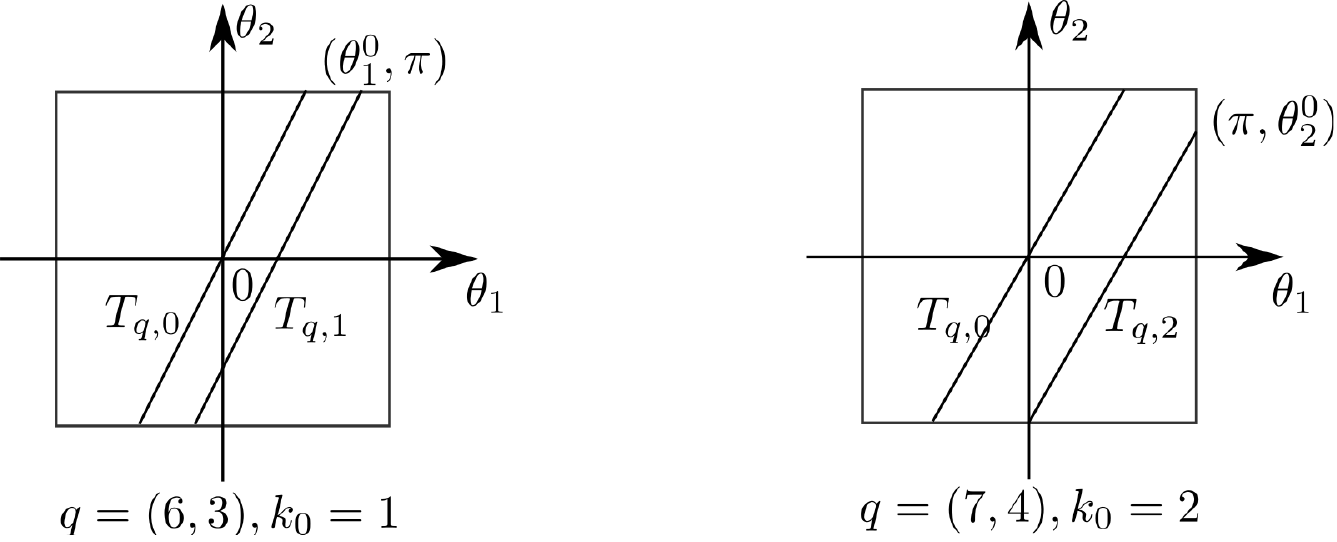}\caption{$T_{q,0}$ and $T_{q,k_0}$ in different cases for $q_1/q_2>\pi/\theta_0$ and $q_1\geq 4, q_2>1$.
On the left $T_{q,k_0}$ intersects with $\theta_2=\pi$ at $(\theta_1^0,\pi)$ for $\theta_1^0>\theta_0$ while on the right $T_{q,k_0}$ intersects with $\theta_1=\pi$ }\label{F:Fig2}
\end{figure}
\\
In both cases,
$$[1/3,1]= [1/3,b]\cup[a,1] \subset F_3(T_{q,k_0})\cup F_3(T_{q,0})\subset F_3(V_q)$$
i.e. $F_3(V_q)=[1/3,1].$

Now one need to consider the case when $q_1\geq 4$ and $q_2=1$ or $q_1<4$ and $q_1/q_2>\pi/\theta_0$.
Notice that for $q_1<4$, since $q_1/q_2>\pi/\theta_0$, $q_2$ can be 1 and $q_1>1$.
Thus it is enough to study the range of function $F_3$ restricted to $V_q$ for $q=(q_1,1)$ with $q_1>1$.

For each $k\in \mathbb{N}, k>0$, the line $t_{q,k}$ intersects with $\theta_2=-\pi$ at $((2k-1)\pi/q_1,-\pi)$
and with $\theta_2=\pi$ at $((2k+1)\pi/q_1,\pi)$.
When both intersection points are in $V_q$, we have $[b_k,a_k]\subset F_3(T_{q,k})$, where
$$a_k:=F_3\left(\frac{(2k-1)\pi}{q_1},-\pi\right), b_k:=F_3\left(\frac{(2k+1)\pi}{q_1},\pi\right).$$

On the other hand
$$F_3\left(\frac{(2k+1)\pi}{q_1},\pi\right)=F_3\left(\frac{(2k+1)\pi}{q_1},-\pi\right), \text{ i.e. } a_{k+1}=b_k.$$
\begin{figure}[ht!]
\includegraphics[width=84mm]{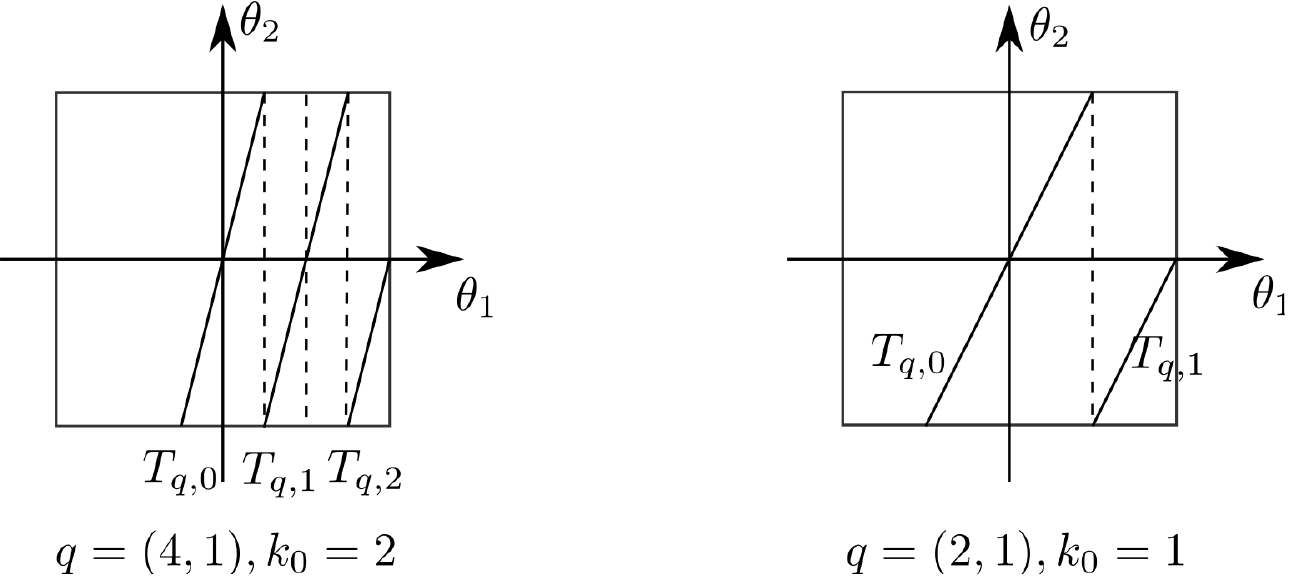}\caption{$T_{q,k}$ for $k=\overline{0,k_0}$ in different cases when $q_1>1, q_2=1$}\label{F:Fig3}
\end{figure}
\\
Besides $[b_0,1]\subset F_3(T_{q,0})$ and there must be some $k_0$ such that $t_{q,k_0}$ intersects with $\theta_1=\pi$ inside $V_q$,
i.e. $[1/3,a_{[q_1/2]}]\subset F_3(T_{q,k_0})$.
Therefore,
$$[1/3,1] = \bigcup_{0\leq k< k_0}[b_k,a_k]\cup [b_{0},1] \cup[1/3,a_{k_0}] \subset\bigcup_{0\leq k \leq k_0}F_3(T_{q,k}),$$
so $F_3(V_q)=[1/3,1].$

\section*{Acknowledgments}
The author is grateful to P. Kuchment for insightful discussion and comments and to the anonymous referees for their subtantial remarks.

\end{document}